\begin{document}
\title{Tracing where IoT data are collected and aggregated}
\thanks{Partially supported by Universit\`a di Pisa PRA\_2016\_64 Project \emph{Through the fog}.}

\author{Chiara Bodei \and Pierpaolo Degano \and Gian-Luigi Ferrari \and Letterio Galletta}
\address{Dipartimento di Informatica, Universit\`a di Pisa}
\email{\{chiara,degano,giangi,galletta\}@di.unipi.it}

\subjclass{F.1.2 Modes of Computation, F.3.1 Specifying and Verifying and Reasoning about Programs}

\begin{abstract}

The Internet of Things (IoT) offers the infrastructure of the information society.
It hosts smart objects that automatically collect and exchange data of various kinds, directly gathered from sensors or generated by aggregations.
Suitable coordination primitives and analysis mechanisms are in order to design and reason about IoT systems, and to intercept the implied technological shifts.
We address some of these issues from a foundational point of view.
To study them, we define \IoTLySa, a process calculus endowed with a static analysis that tracks the provenance and the manipulation of IoT data, and how they flow in the system.
The results of the analysis can be used by a designer to check the behaviour of smart objects, in particular to verify non-functional properties, among which security.

\end{abstract}

\maketitle



\section{Introduction}\label{sec:intro}

Nowadays, an increasingly huge number of heterogeneous devices can be easily plugged into a cyber-physical communication infrastructure, the Internet of Things (IoT).
``Software is eating the world'' is the vivid slogan referring to the {\em smartification} of the objects and devices around us.
The vision offered by IoT as the infrastructure of the information society is fascinating.
It amounts to a global network of things, each with a unique identifier, ranging from light bulbs to cars, equipped with suitable software allowing things to interact each other and coordinate their behaviour. 
Furthermore, smart devices can automatically exchange information of various
kinds gathered from different sources (e.g.~sensors) or generated by aggregating several data sets.

IoT changes the way we interact with our surroundings.
As an example, a smart alarm clock
can drive our coffeemaker to prepare us a cup of coffee in the morning; 
our home automation system turns on our front door light when we arrive at home;
then our smart TV can suggest us some movies for the evening, based e.g.\ on our previous choices.

More connected smart devices and more applications available on the IoT mean more software bugs and vulnerabilities  to identify and fix.
For instance, a bug can cause you to wake up into a cold house in winter or an attacker can enter into your smart TV or 
baby monitor and use it to severely deplete service availability, as it seems to be the case with the recent DDoS attacks reported in the news.%
\footnote{\url{https://www.theguardian.com/technology/2016/oct/22/smart-devices-too-dumb-to-fend-off-cyber-attacks-say-experts}}

Smart devices exhibit and require \emph{open-endedness} to achieve full interactive and cooperative behaviour over the Internet.
Actually, they  generalise the so-called ``embedded systems''  that essentially are controllers of machines {\em not} connected to the Internet and therefore we consider them to live in a \emph{closed} world.
Consequently, new software solutions have emerged for supporting the design and development of IoT, e.g.\ Amazon AWS for IoT and Google Brillo.
We argue that the standard formal techniques and tools need to be adapted in order to support open-endedness of IoT applications and the new complex phenomena that arise in this hybrid scenario.

Here, we contribute to this emerging line of research by introducing the kernel of a formal design framework for IoT, which will provide us with the foundations to develop verification techniques and tools for checking properties of IoT applications.

Our starting point is the process calculus \IoTLySa, a dialect of {\LySa}~\cite{BBDNN_JCS,BNN03}, within the process calculi approach to IoT~\cite{lanese13,Merro_Coord16}.
It has primitive constructs to model sensors and actuators, and suitable primitives both for processing data and for managing the coordination and communication capabilities of interconnected smart objects.
We implicitly assume that sensors are active entities that read data from the physical environment at their own fixed rate.
Actuators instead are passive: they just wait for a command to become active and operate on the environment.
Briefly, our calculus consists of:
\begin{enumerate}
\item systems of nodes, made of (a representation of) the physical components, i.e.\ sensors and actuators, and of software control processes for specifying the \emph{logic} of the objects in the node, including the manipulation of data gathered from sensors and from other nodes.
Intra-node generative communications are implemented through a shared store \`a la Linda~\cite{G85,CG01}.
The adoption of this coordination model supports a smooth implementation of the 
\emph{cyber-physical control architecture}:
physical data are made available by sensors to software entities that manipulate them and trigger the relevant actuators to perform the desired actions of the environment.

\item a primitive for asynchronous multi-party communication among nodes, which can be easily tuned to take care of various constraints, mainly those concerning proximity;

\item functions to process and aggregate data.
\end{enumerate}

\noindent
Our present version of \IoTLySa\ is specifically designed to model monitoring system typical of smart cities, factories or farms.
In this scenario, smart objects never leave their locations, while mobile entities, such as cars or people, carry no smart device and can only trigger sensors.
For this reason we do not address mobility issues here.


A further contribution of this paper is the definition of an analysis for \IoTLySa\ to statically predict the run time behaviour of smart systems.
We introduce a  Control Flow Analysis
that safely approximates the behaviour of a system of nodes.
Essentially, it describes the interactions among nodes, tracks how data spread from sensors to the network, and how data are manipulated.

Technically, our analysis abstracts from the concrete values and only considers  their provenance and how they are put together and processed, giving rise to \emph{abstract values}.
In more detail, it returns for each node $\ell$ in the network: 
\begin{itemize}
\item
an abstract store $\widehat{\Sigma}_\ell$ that records for each sensor and each variable a super-set of the abstract values that they may denote at run time; 
\item
a set $\kappa(\ell)$ that over-approximates the set of the messages received by the node $\ell$, and for each of them its sender;
\item
a set $\Theta(\ell)$ of possible abstract values computed and used by the node $\ell$.
\end{itemize}
\noindent
The results of the analysis provide us with the basis for checking and certifying various properties of IoT systems.  
As it is, the components $\kappa$ and $\Theta$ track how data may flow in the network and how they influence the outcome of functions.
An example of property that can be statically checked using the component $\kappa$ is the detection of redundant communications, thus 
providing the basis for refactoring the system to increase its performance.
Another example concerns which nodes, if any, use the values read by a specific sensor, and this property simply requires inspecting the component $\Theta$ to be established.

In order to assess the applicability of our analysis for verifying IoT systems, we 
additionally consider some security properties.
For that, we extend our core calculus with cryptographic primitives, and propose a general schema  in which some classical security policies can be expressed, in particular secrecy and access control.
We show then that static checks on the outcome of the analysis help in evaluating the security level of the system and in detecting its potential vulnerabilities.


\subsection*{Outline of the paper} 
The paper is organised as follows.
The next section intuitively introduces our proposal with the help of an illustrative example.
In Section~\ref{sec:semantics} 
we introduce the process calculus \IoTLySa, and we present our Control Flow Analysis 
in Section~\ref{sec:analysis}.
We consider security issues in Section~\ref{sec:security}.
Concluding remarks and related work are in Section~\ref{sec:conclusion}. 
The appendixes contain all the proofs with some auxiliary results,
and a table with the abbreviations and the notation used.

Portions of Sections~\ref{sec:example}, \ref{sec:semantics}, and~\ref{sec:analysis} appeared in a preliminary form in~\cite{BDFG_Coord16} where ($i$)  sensors did not probe the operating environment and the effects of actuators where not tracked; and ($ii$) the analysis was less precise, because data were abstracted in a coarser way. 
Section~\ref{sec:security} re-works almost completely our early proposal of checking security policies in~\cite{BDFG_ICE2016}.


\color{black}
			
\section{A smart street light control system}\label{sec:example}

The IoT European Research Cluster (IERC) has recently identified  \emph{smart lighting} in smart cities~\cite{IERC} as one of most relevant applications for the Internet of Things. 
Recent studies, e.g.~\cite{Elejoste,ECMC14}, show that smart street light control systems represent effective solutions to improve energy efficiency. 
Many proposed solutions are based on sensors that acquire data about the physical environment and regulate the level of illumination according to the detected events. 
In this section we show how this kind of scenario can be modelled in \IoTLySa\ and what kind of information our Control Flow Analysis provides to designers. 

\subsection{System specification.}
We consider a simplified system working on a one-way street, inside a restricted traffic zone.
It is made of two integrated parts.
The first consists of smart lamp posts that are battery powered, can sense their surrounding environment and can communicate with their neighbours to share their views. 
If (a sensor of) the lamp post perceives a pedestrian and there is not enough light in the street it switches on the light and communicates the presence of the pedestrian to the lamp posts nearby.
When a lamp post detects that the level of battery is low, it informs the supervisor of the street lights, $N_s$, that will activate other lamp posts nearby.
The second component of the street light controller uses the electronic access point to the street. 
When a car crosses the checkpoint, a message is sent to the supervisor of the street accesses, $N_a$, that in turn notifies the presence of the car to $N_s$, which acts as a point of connectivity
to the Internet.
A notice is also sent to the node $N_{pd}$ that represents a cloud service of the police department.
This service checks whether the car is enabled to enter that restricted zone, through automatic number plate recognition; below we will omit any further detail on this node, 
e.g.\ the components for fining the driver.
The supervisor $N_s$ sends a message to the lamp post closest to the checkpoint that starts a forward chain till the end of the street, thus completing the specification of the overall cooperative behaviour.
The structure of our control light system is in~\figurename~\ref{fig:example}.
Below, we will often use a sugared version of the syntax that is made precise in sub-section~\ref{subsect:syntax}.

In our model we assume that each sensor has a unique identifier, hereafter a natural number. 
Analogously, also actuators have a unique identifier.
Since every node has a fixed number of variables, the store of the node can be seen as an array, a portion of which is designated to record the values read by sensors.
A sensor identifier is then used as the index to access its reserved store location. To emphasise that indexes are sensor identifiers, we underline them.

\begin{figure}[t]
\centering
\includegraphics[scale=0.4]{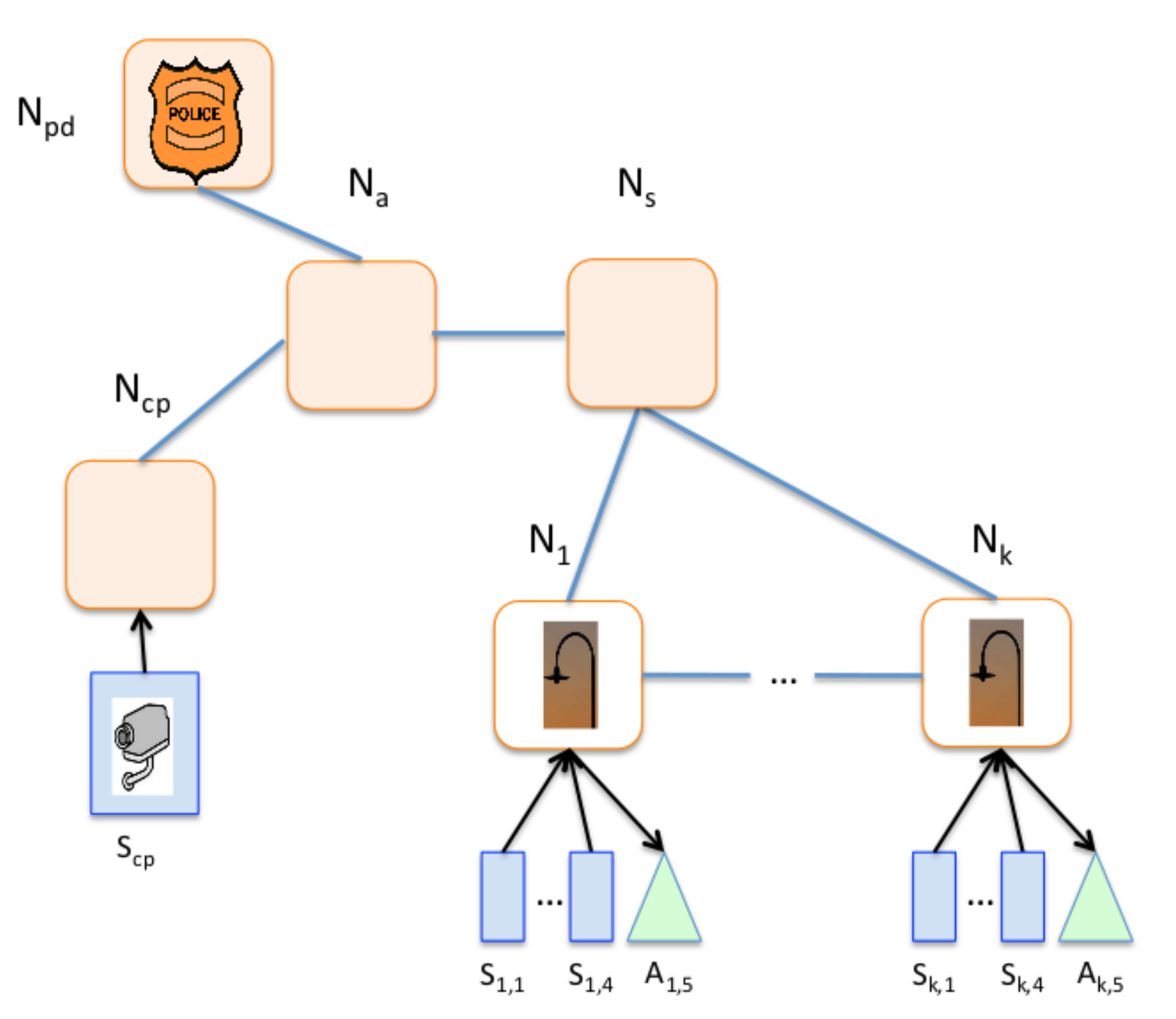}
\caption{The organisation of nodes in our street light control system.}
\label{fig:example}
\end{figure}

In \IoTLySa\, each node consists of control processes, sensors and actuators, and of a local store.
Processes specify the logic of 
the system: they coordinate sensors and actuators, communicate with the other nodes and manage data gathered from sensors and from other nodes.

We now define the checkpoint $N_{cp}$ node.
It only contains a visual sensor $S_{cp}$, defined below, that takes a picture of the car detected in the street:
\[
S_{cp} = \mu h. (\tau.probe(\underline{1})). \tau. \ h
\]
where $\underline{1}$ is the identifier of the sensor $S_{cp}$, and $probe(\underline{1})$ returns the picture of the car. 
The sensor makes the picture available to the other components of the node $N_{cp}$ by storing it in the location (identified by) $\underline{1}$ of the shared store. 
The action $\tau$ denotes internal actions of the sensor, which we are not interested in modelling, e.g.~adjusting the camera focus; the construct $\mu h.$ implements the iterative behaviour of the sensor.
Then, the taken picture is 
enhanced (by  using the function $noiseRed$ for reducing noise) by the process $P_{cp}$ and sent to the supervisor $N_a$ 
\[
	P_{cp} = \mu h. (z := \underline{1}).(z' := noiseRed(z)).\OUTM{z'}{\{\ell_a\}}. \ h
\]
where $\ell_a$ is the label of the node $N_a$; the assignment $z := \underline{1}$ stores in $z$ the picture read by the sensor $S_{cp}$  (recall that $\underline{1}$ is the identifier of $S_{cp}$); and $\langle\!\langle z' \rangle\!\rangle$ is 
a multi-output communication sending $z'$ to the nodes with labels indicated after $\triangleright$ (here only to the single node $N_a$).
The checkpoint $N_{cp}$ is defined as 
\[
	 N_{cp} = \ell_{cp} : [ P_{cp} \ \|\ S_{cp}  \ \|\ B_{cp} ]
\]
where $\ell_{cp}$ is the identifier of $N_{cp}$ and $B_{cp}$ abstracts other components we are not interested in, among which its store $\Sigma_{cp}$.
The node $N_a$ receives the picture and communicates the presence of the car to the lamp posts supervisor $N_s$ and to the police department $N_{pd}$.
The specification of $N_a$ is as follows
\[
 N_{a} = \ell_a : [\ \mu h. \INPS{}{x}{\OUTM{car, x}{\{\ell_s, \ell_{pd}\}.\ h}} \ \|\ B_{a}\ ]
\]
where $\ell_s$ and $\ell_{pd}$ are the identifiers of $N_s$ and $N_{pd}$, respectively (see below for the intuition of the general format of the input $(; x)$).
The supervisor $N_s$ contains the process $P_{s,1}$ that receives the picture from $N_a$ and sends a message to the node closest to the checkpoint, call it $N_1$, labelled with $\ell_1$:
\[
 P_{s,1} = 	\mu h. \INPS{car}{x}{\OUTM{x}{\{\ell_{1}\}}. \ h}
\]
The input $(car; x)$ is performed only if the first element of the corresponding output matches the element before the ``;'' 
(in this case the constant $car$), and the store variable $x$ is bound to the value of the second element of the output (see below for the full definition of $N_s$).
 
In our smart street light control system there is a node $N_p$ for each lamp post, each of which has a unique identifier $p \in [1,k]$.
Each lamp post is equipped with four sensors to sense $(1)$ the environment light, $(2)$ the solar light, $(3)$ the battery level and $(4)$ the presence of a pedestrian. 
We define each of them as follows
\[
S_{p,i} = \mu h.\, probe(i). \ \tau.\;h
\]
where $probe(i)$ returns the perceived value by the $i^{th}$ sensor $S_{p,i}$ and $i \in [1,4]$, and stores it.  
After some internal actions $\tau$, the sensor $S_{p,i}$ iterates its behaviour.
The actuator for the lamp post $p$ is defined as
\[
	A_{5} = \mu h.\, (\!|\underline{5}, \{\mathsf{turnon}, \mathsf{turnoff} \}|\!).\;h
\]
It only accepts a message from $N_c$ whose first element is its (undelined) identifier (here \underline{5}) and whose second element is either command $\mathsf{turnon}$ or $\mathsf{turnoff}$ and executes it. 

The control process of each lamp post node is composed by two parallel processes, $P_{p,1}$  and $P_{p,2}$. 
The first process is defined as follows
\begin{align*} 
P_{p,1} = \mu h. & (x_1:= \underline{1}.\, x_2:= \underline{2}.\, x_3:= \underline{3}.\, x_4:= \underline{4}). \\
& (x_4 = true)\ ?\  \\
& \qquad (x_1 \geq th_1 \land x_2 \geq th_2)\ ?\ \\
 &  \qquad \qquad \qquad \qquad\ \  \ (x_3 \geq th_3)\ ?\ \OUTS{\underline{5},\mathsf{turnon}}{\OUTM{x_4}{L_p}}.\ h\\
 &  \qquad \qquad \qquad \qquad \qquad \qquad \quad \ \ : \ \OUTM{\mathsf{err}, \ell_p}{\{\ell_s\}}. \ h \\
  &  \qquad \qquad \qquad \qquad \qquad \qquad : h \\
    & \ \ \ \ \qquad \qquad: \  \OUTS{\underline{5},\mathsf{turnoff}}{h}
\end{align*}
The process reads the current values from the sensors and stores them into the local variables $x_i$. 
The nested conditional statement says that the actuator is turned on if (i) a pedestrian is detected in the street ($x_4$ holds), (ii) the intensity of environment and solar lights are greater than or equal to the given thresholds 
$th_1$ and $th_2$, and (iii) there is enough battery (at least $th_3$).
In addition, the presence of the pedestrian is communicated to the lamp posts nearby, whose labels, typically $\ell_{p-1}$ and $\ell_{p+1}$, are in $L_p$.
Instead, if the level battery is insufficient, an error message, including its identifier $\ell_p$, is sent to the supervisor node, labelled $\ell_s$.
The second process $P_{p,2}$ is defined as follows
\[
\! P_{p,2} = \mu h. \INPS{}{x}{(x = true \lor is\_a\_car(x)) \ ? \ (\OUTS{\underline{5},\mathsf{turnon}}{\OUTM{x}{L_p}). h}:  \OUTS{\underline{5},\mathsf{turnoff}} h}
\]
It waits for messages from its neighbours or from the supervisor node $N_s$.
When one of them is notified the presence of a pedestrian ($x = true$) or of a car ($is\_a\_car(x)$ holds), the current lamp post orders the actuator to switch the light on.
Each lamp post $p$ is described as the {\sc IoT}-{\sc LySa} node below
\[
N_p = \ell_p : [\Sigma_p \ \|\ P_{p,1} \ \|\ P_{p,2} \ \|\ S_{p,1} \ \|\ S_{p,2} \ \|\ S_{p,3} \ \|\ S_{p,4} \ \|\ A_{p,5}]
\]
where $\Sigma_p$ is the store of the node $\ell_p$, shared among its components. 
%
The supervisor node $N_s$ of lamp posts is defined below, where $x$ ranges over the $k$ lamp posts
\[
N_s = \ell_s : [\mu h.\, \INPS{err}{x}{\OUTM{true}{L_x}}. \ h\, \ \|\ P_{s,1}  \ \|\ B_s]
\]
where $P_{s,1}$ is the process previously defined. 
As above the input $(err; x)$ is performed only if the first element of the corresponding output matches the constant $err$, and the store variable $x$ is bound to the value of the second element of the output, i.e.\ the label of the relevant lamp post.
If this is the case, after some internal computations, $N_s$ warns the lamp posts nearby $x$ (included in $L_x$) of the presence of a pedestrian.

The whole intelligent controller $N$ of the street lights is then described as the parallel composition of the checkpoint node $N_{cp}$, the supervisors nodes $N_a$ and $N_s$, the nodes of lamp posts $N_p$, with $p \in [1,k]$, and the police department node $N_{pd}$:
\[
N = N_{cp} \mid N_a \mid N_s \mid N_1 \mid \dots \mid N_k \mid N_{pd}
\]

\subsection{Checking properties.}
We would like to statically predict how the system behaves at run time. 
In particular, we want to compute:
$(i)$ how nodes interact each other; 
$(ii)$ how data spread from sensors to the network (tracking); and
$(iii)$ which computations each node performs on the received data.
To do that, we define a Control Flow Analysis, which abstracts from the concrete values by only considering their provenance and how they are manipulated. 
For example, consider the picture sent by the camera of $S_{cp}$ to its control process $P_{pc}$. 
In the analysis we are only interested in tracking where the picture comes from, and not in its actual value; 
so we use (a suitable representation of) the  abstract value $1^{\ell_{cp}}$ to record the camera that took it. 
The process $P_{pc}$ reduces the noise in the pictures and sends the result to $N_a$. 
Our analysis keeps track of this manipulation through (a representation of) the abstract value $noiseRed^{^{\ell_{cp}}}(1^{\ell_{cp}})$, 
meaning that the function $noiseRed$, computed by the node $\ell_{cp}$, is applied to data coming from the sensor with identifier $1$ of $\ell_{cp}$.

In more detail, our analysis returns for each node $\ell$ in the network: 
an abstract store $\widehat{\Sigma}_\ell$ that records for each variable a super-set of the abstract values that it may denote at run time and for each sensor a specific abstract value as exemplified above; 
a set $\kappa(\ell)$ that approximates the set of the messages received by the node $\ell$; 
and the set $\Theta(\ell)$ of possible abstract values computed and used by the node $\ell$.
Actually, abstract values may grow unbounded and we will thus represent them through regular tree grammars.
In our example, for each lamp post labelled $\ell_p$,
the analysis returns in $\kappa(\ell_p)$ both (the grammar representing) the abstract value $noiseRed^{^{\ell_{cp}}}(1^{\ell_{cp}})$ and the sender of that message, i.e. $\ell_{p+1}$.

The result of our analysis can be exploited to perform several verifications. 
For instance, since the pictures of cars are sensitive data, one would like to check whether they are kept confidential.
A simple check on $\kappa$ suffices for discovering that our system fails to protect the pictures because they are always sent in clear.   
An obvious solution is to encrypt these sensitive data before sending them.
However, privacy is not guaranteed either, because the encrypted information is correctly sent to the supervisor of the street access $N_a$ and to the police department $N_{pd}$, but also to all lamp posts.
By inspecting  $\kappa$ and $\Theta$ we detect this violation.
A possible solution is making the picture anonymous through a suitable function $an$, e.g.\ blurring the plate, before 
sending it to the supervisor of the street light $N_s$.
In order to reach a balance between
protecting data for privacy and using them for safety, we can resort to the following amended 
code of the process $P_{cp}$ and of the node $N_a$: 
\[
	P_{cp}' = \mu h. (z := \underline{1} ).(z' := noiseRed(z)).\OUTM{\{z'\}_k}{\{\ell_a\}}. \ h
\]
\[
 N_{a}' = \ell'_a : [\ \mu h. \INPS{}{y}{\DECSO{y}{}{x}{k}{}{
 \OUTM{car, \{x\}_{k'}}{\{\ell_{pd}\}}.\ \OUTM{car, an(x)}{\{\ell_s\}.\ h}
                                                                                          }
                                                          } \ \|\ B_{a}\ ]
\]

Another way of exploiting the results of our analysis is detecting whether there are redundant communications, which are possibly power consuming.
For example, since the street is one-way, when a car is present the lamp post at position $p$ needs not to alert the one at $p-1$.
By inspecting $\kappa$ it is easy to detect a redundant, useless communication from the next lamp post.
On this basis, the designer can remove the label $\ell_{p-1}$ from the set $L_p$ of receivers in the definition of $P_{p,2}$ for all lamp posts.

A further issue concerns the fact that an IoT system should be robust enough and work in presence of a faulty device.
Our analysis can support a ``what-if'' reasoning.
For example, suppose that the designer wants to check what happens if a lamp post, say $\ell_{p-1}$, goes out of order.
By inspecting the $\kappa$ component, the designer discovers that $\ell_p$ may receive messages from $\ell_{p-1}$ and $\ell_{p+1}$.
Assuming that redundancy has been removed as sketched above, no message will be sent to $\ell_p$ from $\ell_{p-1}$ when a car is in the street (as specified in the process $P_{i-1, 2}$).
Consequently, all the lamp posts $\ell_q$ with $q \geq p-1$, will not be switched on, even though a car is running in the street.
Clearly, when pedestrians are walking there, the relevant lamp post will be switched on as soon as their presence is detected, except for the faulty one, of course.
An easy fix is enlarging the set $L_p$ in the process $P_{p,2}$ to also contain $\ell_{p+2}$, so introducing a possibly useful redundancy; similarly, the process $P_{s,1}$ will send its message to both $\ell_1$ and $\ell_2$.

\section{The calculus \IoTLySa}\label{sec:semantics}

The IoT applications operate in a cyber-physical world, and therefore modelling them requires taking into account  both logical and physical aspects. 
Physical data, typically collected by environmental sensors, influence the logics of an application, which in turn modifies the physical world, through actuators.
Here we are only interested in modelling the logical components, in identifying their frontier with the physical world and in abstractly representing the interactions between them.
Typical representations of the world are based on continuous models, like, e.g.~(ordinary or stochastic) differential equations, or on discrete version of them or on hybrid automata~\cite{Henzinger96}.
Instead, for us the physical world is a black box from which sensors, that are always active, can measure the value of some observables with a certain rate of their own.
These values are made available to controllers that trigger suitable actuators.
Actuators are passive entities that can only execute their task when activated;
they operate on the physical world,  so changing the value of some observables.
Furthermore, we assume that the world can autonomously evolve, and that the changes of its state are revealed and detected by sensors.

In order to model the logical components of IoT applications, we adapt the {\LySa} calculus \cite{BBDNN2003lysa,BBDNN_JCS,BNN03}, based 
on the $\pi$-~\cite{MPW92} and the Spi-calculus \cite{AG-spi}. 
 For that we introduce:
 \begin{enumerate}[label={(\roman*)},leftmargin=1.1\parindent]
\item systems of nodes, in turn consisting of sensors, actuators and control processes, plus a shared store within each node, allowing for internal communications;
\item primitives for measuring values in the world with sensors, and for triggering actuator actions;
\item an asynchronous multi-party communication modality among nodes, subject to constraints, mainly concerning
physical proximity;
\item functions to process data;
\item explicit conditional statements.
\end{enumerate}
\noindent
We also extend  our proposal in Section~\ref{sec:security} with 
\begin{enumerate}[resume*]
\item encryption and decryption constructs to represent and handle some logical aspects of security.
\end{enumerate}

\subsection{Syntax.}\label{subsect:syntax}
The logical components of an IoT system are specified using a two-level syntax, one describing the whole system and the other its components.
At the first level one defines system, consisting of the parallel composition of a fixed number of labelled nodes.
Each node is defined at the second level, and it hosts a store, control processes, sensors and actuators.
The label $\ell$ uniquely identifies the node $\ell: [B]$ and may represent further characterising information (e.g.\ its location or other contextual information).
The syntax of systems is as follows.
\[
\begin{array}{ll@{\hspace{2ex}}l}
{\mathcal N} \ni 
N ::= & {\it systems \ of \ nodes} &\\
& \NIL                       & \hbox{empty system} \\
& \ell: [B]                      & \hbox{single node}\;  (\ell \in {\mathcal L} \text{, the set of labels})
\\
& N_1\ |\ N_2 &  \hbox{parallel composition of nodes}
\end{array}
\]
\[
\begin{array}{ll@{\hspace{2ex}}l}
{\mathcal B} \ni 
B ::= & \text {\it node components} &\\
&  \Sigma_\ell  & \hbox{store of node } \ell
\\
& P  & \hbox{process}
\\
& S   & \hbox{sensor, with a unique identifier $i \in \mathcal{I}_\ell$}
\\
& A     & \hbox{actuator, with a unique identifier $j \in \mathcal{J}_\ell$}
\\
& B \ \|\ B & \hbox{parallel composition of node components}
\end{array}
\]

A node component $B$ contains $\#(\mathcal{I}_{\ell})$ sensors $S$ and $\#(\mathcal{J}_{\ell})$ actuators $A$.
It also has finitely many control processes $P$ that use a finite set of variables ${\mathcal X}_\ell$.
We impose that in $\ell: [B]$  there always is a \emph{single} store $ \Sigma_{\ell} : \mathcal{X} _\ell \cup \mathcal{I}_{\ell} \ \rightarrow \mathcal{V} $, where ${\mathcal V}$ is the denumerable set of values, including numbers, booleans etc., but neither labels nor identifiers of sensors and actuators.
(We feel free to omit from here afterwards the label $\ell$ when immaterial.)
Therefore, a store is essentially an array of fixed dimension, and intuitively a variable $x \in {\mathcal X}_\ell$ and an identifier $i \in \mathcal{I}_{\ell}$ are interpreted as indexes in the array (no need of $\alpha$-conversions).
We assume that store accesses are atomic, e.g.\ through CAS instructions~\cite{CAS91}.

The syntax of control processes is as follows
\[
\begin{array}{lll}
P ::= & {\it control \ processes} &\\
& \NIL                       & \hbox{inactive process} \\
& \OUTM{E_1, \cdots, E_r}{L}.\,P & \hbox{asynchronous multi-output L$\subseteq {\mathcal L}$} \\
& \INPS{E_1,\cdots,E_j}{x_{j+1},\cdots,x_r}{P}\  & \hbox{input (with matching)}\\
& E?P:Q &  \hbox{conditional statement} \\
& h   &  \hbox{iteration variable}
\\
&\mu h. \ P & \hbox{iteration}
  
\\[.2ex]
& x := E.\,P & \hbox{assignment to $x \in {\mathcal X}_\ell$}
\\
& \OUTS{j, \gamma}{P}& \hbox{output of action $\gamma$ to actuator $j$}
\end{array}
\]

The prefix $\OUTM{E_1, \cdots, E_r }{L}$ implements a simple form of multicast communication among nodes: the tuple $E_1, \dots, E_r$ is asynchronously sent to the nodes with labels in $L$ and that are ``compatible'' (according, among other attributes, to a proximity-based notion and to the transmission capability of the sender).
The input prefix $(E_1, \!\cdots\!,E_j; x_{j+1},\! \cdots \!,x_r)$
is willing to receive a $r$-tuple, provided that its first $j$ terms match the input ones, and then binds the remaining store variables (separated by a ``;'') to the corresponding values (see \cite{BNN03,BBF15} for a more flexible choice).
Otherwise, the $r$-tuple is not accepted.
A process repeats its behaviour, when defined through the iteration construct $\mu h.\ P$, where $h$ is the iteration variable;
an obvious sanity requirement is that any recursion variable $h$ only occurs within the scope of its binding $\mu h$.
Finally, a process can command an actuator to perform an action over the physical world.

Sensors and actuators 
have the form:

\[
\begin{array}{ll@{\hspace{2ex}}l ll@{\hspace{2ex}}l}
S &::=  {\it sensors} & &
A &::= {\it actuators} &\\
& \NIL &  \hbox{\hspace{-1mm}inactive sensor} &&
 \NIL &  \hbox{\hspace{-1mm}inactive actuator}
\\
& \tau.S & \hbox{\hspace{-1mm}internal action} &
& \tau.A & \hbox{\hspace{-1mm}internal action} 
\\
& probe(i).\,{S} & \hbox{\hspace{-.8mm}sense a value by} 
&& 
(\!|j, \Gamma|\!).\,A & \hbox{\hspace{-1mm}command for actuator $j$} 
\\
&& \hbox{\hspace{-1mm}the $i^{th}$ sensor}
 &&
\gamma.A & \hbox{\hspace{-1mm}triggered action ($\gamma \in \Gamma$)}

\\
& h & \hbox{\hspace{-1mm}iteration variable} 
&&
h & \hbox{\hspace{-1mm}iteration variable} 
\\
& \mu h\,.\, S & \hbox{\hspace{-1mm}iteration}
&& \mu h\,.\, A & \hbox{\hspace{-1mm}iteration}
\end{array}
\]

\noindent
We recall that each sensor and each actuator is identified by a unique identifier belonging to the sets ${\mathcal I_\ell}$ and ${\mathcal J_\ell}$, respectively.
A sensor can perform an internal action $\tau$, e.g.\ for resetting or for changing its battery mode.
In addition, it senses the physical world and stores the value observed in its location $i$.
Recall that each sensor is dedicated to measure a specific observable, e.g.\ the temperature or the humidity.
Also an actuator can perform an internal action $\tau$.
More interestingly, upon receiving a command $\gamma$ from a controller, an actuator executes $\gamma$, possibly causing a change in the physical state of the world.
Both sensors and actuators can iterate their behaviour.

The syntax of terms follows.

\[
\begin{array}{ll@{\hspace{2ex}}l}
E ::= & 
       {\it terms}&\\
& v  & \hbox{value } (v \in {\mathcal V})\\
& i  & \hbox{sensor location } (i \in {\mathcal I_\ell})\\
& x  & \hbox{variable } (x \in {\mathcal X}_\ell)\\
& f(E_1, \cdots, E_r) &  \hbox{function on data }  (f \in {\mathcal F})\\
\end{array}
\]
The term $f(E_1, \cdots, E_r)$ is the application of function $f$ to $n$ arguments; 
we assume as given a set of primitive functions ${\mathcal F}$, typically for aggregating or comparing values, be they computed or data sampled from the environment.


\subsection{Operational Semantics.}

Our reduction semantics relies on an abstract model of the evolution of the physical world from a state $\mathcal{E}$ to a state $\mathcal{E}'$.
We write $\mathcal{E} \triangleright \mathcal{E}'$ when a state transition occurs, without detailing how this happens, because for us the world is a black box.
Similarly, we will write $\gamma(\mathcal{E}) \triangleright \mathcal{E}'$ to represent the state transition caused by an actuator that performs the action $\gamma$ on $\mathcal{E}$.

We take the syntactic elements of \IoTLySa\ up to the following \emph{structural congruence} $\equiv$ on nodes, processes, sensors and actuators.
It is standard except for the last rule that equates a multi-output with no receivers and the inactive process, and for the fact that inactive components of a node are all coalesced.

\[
\begin{array}{ll}
-\  & ({\mathcal N}/_{\equiv}, \mid, \NIL)  \text{ and } ({\mathcal B}/_{\equiv}, \|, \NIL)  \mbox{ are commutative monoids} 
\\
-\ &\mu \,h\,.\, X \equiv X\{\mu h\,.\, X/h\}  \quad
 \text{ for } X \in \{P, A, S\} \hspace{3cm} (\bigstar)
\\
-\  & \langle \langle E_1,\cdots,E_r \rangle \rangle : \emptyset . \ \NIL\equiv \NIL
\end{array}
\]

We have a two-level {\em reduction relation\/}  \sfreccia\ reflecting the two-level structure of \IoTLySa.
It is defined as the least relation on both nodes and their components, satisfying the set of inference rules in \tablename~\ref{opsem}.
The arrow is decorated with $\mathcal{E}$, the state of the world before the transition, and $\mathcal{E}'$ after its occurrence.
We assume the standard denotational interpretation $\dsem{E}_\Sigma$ for evaluating terms.

\begin{table*}
\footnotesize
\begin{mathpar}
{

\inferrule[(Sense)]%
{ measure(i, \mathcal{E}) = v}%
{\Sigma \parallel probe(i).\,{S \parallel B
               \nfreccia
               \Sigma\{v/{i}\} \parallel S \parallel B}
}%

\inferrule[(Asgm)]%
{\dsem{E}_\Sigma = v}%
{ \Sigma \parallel  x:=E.\,P \parallel B \ 
               {\nfreccia} \ 
             \Sigma\{v/x\}  \parallel  P \parallel B}
}             

\inferrule[(Ev-out)]%
{\bigwedge_{i=1}^r  \dsem{E_i}_{\Sigma} = {v_i} }%
{\Sigma  \parallel  \OUTM{E_1,\cdots,E_r}{L}.\,P \parallel B \ \nfreccia \ 
  \Sigma  \parallel  \OUTM{v_1,\cdots, v_r}{L} . \NIL\parallel P \parallel B}

\inferrule[(Multi-com)]%
{\ell_2 \in L \wedge \ Comp(\ell_1,\ell_2)  \wedge \  \bigwedge_{i=1}^{j} \dsem{E_i}_{\Sigma_2} = {v_i} }%
{
 \ell_1: [\OUTM{v_1,\cdots,v_r}{L}. \, \NIL \parallel B_1] \mid
             \ell_2:[\Sigma_2 \ \| \ (E_1,\cdots,E_j;x_{j+1},\cdots,x_r).Q \parallel B_2]
              \\\\           \nfreccia          \\\\
              \ell_1: [\OUTM{v_1,\cdots,v_r}{L \setminus \{\ell_2\}}. \, \NIL \parallel B_1]
             \mid \ell_2: [\Sigma_2\{v_{j+1}/x_{j+1},\cdots,v_r/x_r\} \parallel Q \parallel B_2]
}

{
\inferrule[(Cond)]%
{\dsem{E}_\Sigma = \tt{b}_i \ \ \ i \in\{1,2\}}%
{\Sigma \parallel E?\,P_1:P_2 \parallel B \ 
              \nfreccia \ 
                            \Sigma  \parallel  P_i \parallel B}
 \ \text{with } \tt{b}_1 = \tt{true}, \tt{b}_2 = \tt{false}
 
 \inferrule[(Int)]%
{ }%
{ \tau.\,X \ \nfreccia\  X}
}

\inferrule[(A-com)]%
{\gamma \in \Gamma}%
{\OUTS{j,\gamma}{P \parallel   (\!|j,\Gamma|\!).\,A \parallel B }
             \nfreccia
             \ P \parallel \gamma.\,A \parallel B}

\inferrule[(Act)]%
{\gamma(\mathcal{E}) \triangleright \mathcal{E'}}%
{\gamma.A \sfreccia\  A}

\inferrule[(Phys)]%
{\mathcal{E} \triangleright \mathcal{E'}}%
{N \sfreccia N}


{
\inferrule[(Node)]%
{B \ \sfreccia \ B'}%
{\ell: [B] \ \sfreccia \ \ell:[B']}

\inferrule[(ParN)]%
{N_1 \sfreccia N'_1}%
{{N_1 \mid N_2} \sfreccia {N'_1 \mid N_2}}

\inferrule[(ParB)]%
{B_1 \sfreccia B'_1}%
{B_1 \parallel B_2 \sfreccia B'_1\parallel B_2}

\inferrule[(CongrY)]%
{Y_1' \equiv Y_1 \sfreccia Y_2 \equiv Y'_2}%
{Y'_1 \sfreccia Y'_2}

}

\end{mathpar}

\caption{Reduction semantics, where $X \in \{S, A\}$, 
 and $Y \in \{N, B\}$.}
\label{opsem}
\end{table*}

The first two rules implement the (atomic) asynchronous update of shared variables inside nodes, by using
the standard notation $\Sigma\{-/-\}$ for store update.
According to (Sense), the $i^{th}$ sensor measures the value $v$ of its observable through the semantic function $measure$ that operates on the environment; then it stores $v$ into the location $i$. 
The rule (Asgm) specifies how a control process updates the variable $x$ with the value of $E$.

The rules (Ev-out) and (Multi-com) drive asynchronous multi-communications among nodes.
In the first a node sends a tuple of values $\mess{v_1,...,v_r}$, obtained by the evaluation 
of $\mess{E_1,...,E_r}$.
Asynchrony is realised by spawning the new process $\OUTM{v_1,\cdots,v_r}{L}. \, \NIL $ (with the inactive process as 
continuation) in parallel with the continuation $P$;
the new process offers the message to all the receivers belonging to the set $L$.
In the rule (Multi-com), the message coming from $\ell_1$ is received by a node labelled $\ell_2$. 
The communication succeeds, provided that (i) $\ell_2$ belongs to the set $L$ of possible receivers, 
(ii) the receiver is within the transmision range of the sender, according to the compatibility function $Comp$,
and (iii) that the first $j$ values match the evaluations of the first $j$ terms in the input. 
Moreover, the label $\ell_2$ is removed by the set of receivers $L$ of the message. 
The spawned process terminates when all its receivers have received the message (see the last congruence rule in $\bigstar$).
The role of the compatibility function $Comp$ is crucial in modelling real world constraints on communication.
A common requirement is that inter-node communications are proximity-based, i.e. that only nodes that are in the transmission range of the sender can read the message. This is easily encoded here by defining a possibly non-symmetric predicate (over node labels) yielding true only when the second node is in the transmission range of the first. 
Of course, this function could be enriched in order to consider finer notions of compatibility expressing various policies, e.g.\ topics for event notification.

According to the evaluation of the expression $E$, the rule (Cond) says that the process continues as $P_1$ (if $\dsem{E}_{\Sigma}$ is true) or as $P_2$ (otherwise).
The rule (Int) applies to sensors and actuators and simply governs the occurrence of internal actions.

A process commands the $j^{th}$ actuator through the rule (A-com), by sending it the pair $\langle j, \gamma \rangle$.
The effect is that $\gamma$ prefixes the actuator, if $\gamma$ is one of its actions.
The rule (Act) says that the actuator performs the action $\gamma$ over the current state $\mathcal{E}$ of the environment, triggering an evolution to a new state. 
The other rule that changes the environment state is (Phys).
It models the fact that environments can evolve independently of the application in hand.
Since for us an environment is a black box, we resort to non-determinism to abstract from this kind of independence.

The rules  (ParN) and (ParB) propagate reductions across parallel composition; the rule (Node) lifts the transitions from components to nodes. 
Finally, the rules (CongrY), with $Y \in \{N,B\}$, for nodes and components are the standard reduction rules for the congruence defined in ($\bigstar$).

Back to our example of Section~\ref{sec:example}, 
consider a run where a picture of a car is taken by the camera $S_{cp}$ in the node $\ell_{cp}$ and is sent to the node $\ell_{a}$.
First of all, we briefly recall the definition of the camera $S_{cp}$ and of the control process $P_{cp}$ of $\ell_{cp}$:
\begin{align*}
S_{cp} & =  \mu h. (\tau.probe(\underline{1})). \tau. \ h \\
P_{cp} & =  \mu h. (z := \underline{1}).\underbrace{(z' := noiseRed(z)).\OUTM{z'}{\{\ell_a\}}. \ h}_{P'_{cp}}
\end{align*}
The sequence of transitions is as follows:
\begin{align}
N_{cp} =\ 
& \ell_{cp} [ P_{cp} \parallel S_{cp}  \parallel \Sigma_{cp}  \parallel B_{cp} ] \nfreccia^{\!\!\!\!+} 
\\
& \ell_{cp} [ P_{cp} \parallel \underbrace{\tau. S_{cp}}_{S'_{cp}}  \parallel \underbrace{\Sigma_{cp}\{v / 1\}}_{\Sigma^1_{cp}}  \parallel B_{cp} ] \nfreccia^{\!\!\!\!+} 
\\
& \ell_{cp} [ P'_{cp} \parallel S'_{cp}  \parallel \underbrace{\Sigma^1_{cp}\{v / z\}}_{\Sigma^2_{cp}}  \parallel B_{cp} ] \nfreccia
\\
& \ell_{cp} [ \OUTM{z'}{\{\ell_a\}}. \ P_{cp} \parallel S'_{cp}  \parallel \underbrace{\Sigma^2_{cp}\{v' / z'\}}_{\Sigma^3_{cp}}  \parallel B_{cp} ] \nfreccia
\\
& \ell_{cp} [\underbrace{P_{cp} \parallel S'_{cp}  \parallel \Sigma^3_{cp} \parallel B_{cp}}_{B'_{cp}} \parallel  \OUTM{v'}{\{\ell_a\}}. \ 0 ]  = N'_{cp}
\end{align}
\\
where as usual $\nfreccia^{\!\!\!\!+}$ stands for the occurrence of one or more transitions.
The rules (CongrB), (Int) and (Sense) are applied to move from (3.1) to (3.2) and the value $v$ amounts to $measure(1, \mathcal{E})$; 
the rules (CongrB) and (Asgm) drive the transition from (3.2) to (3.3); 
once again the rule (Asgm) drives the transition from (3.3) to (3.4) where the value $v'$ is the result of applying the function $noiseRed$ to the value $v$;
in the transition from (3.4) to (3.5) the rule (Ev-out) is applied.

As a further example, consider the steps carried out by the node $\ell_a$ to receive the car picture and to forward it to the policy station $\ell_{pd}$ and to the lamp post supervisor $\ell_{s}$.
Here we recall the definition of the node $N_a$:
\[
	N_a = \ell_a [\ \underbrace{\mu h.(;x).\OUTM{car, x}{\{\ell_s, \ell_{pd}\}.\ h}}_{P_a}  \parallel  \Sigma_{a} \parallel B_{a}\ ]
\]

The transitions are as follows:
\begin{align}
& N'_{cp} \mid N_a \nfreccia 
\\
& \ell_{cp} [ B'_{cp} \parallel \OUTM{v'}{\{\ell_a\}}. \ 0 ] \mid 
\ell_a [\ (;x).\OUTM{car, x}{\{\ell_s, \ell_{pd}\}.\ P_a}  \parallel  \Sigma_{a} \parallel B_{a}\ ] \nfreccia 
\\
& \ell_{cp} [ B'_{cp} \parallel \OUTM{v'}{\emptyset}. \ 0 ] \mid 
\ell_a [\ \OUTM{car, x}{\{\ell_s, \ell_{pd}\}.\ P_a}  \parallel  \underbrace{\Sigma_{a}\{v'/x\}}_{\Sigma'_a} \parallel B_{a}\ ] \nfreccia 
\\
& \ell_{cp} [ B'_{cp} ] \mid 
\ell_a [\ \OUTM{car, x}{\{\ell_s, \ell_{pd}\}.\ P_a}  \parallel  \underbrace{\Sigma_{a}\{v'/x\}}_{\Sigma'_a} \parallel B_{a}\ ] \nfreccia 
\\
& \ell_{cp} [ B'_{cp} ] \mid 
\ell_a [\ P_a  \parallel  \Sigma'_a \parallel B_{a} \parallel \OUTM{car, v'}{\{\ell_s, \ell_{pd}\}.0} ] 
\end{align}
where we apply the rule (CongrB) to the process $P_a$ for the step from (3.6) to (3.7); 
the rule (Multi-com) is used to move from (3.7) to (3.8), where $v'$ is the value sent by the node $\ell_a$;
since in the message $\OUTM{v'}{\emptyset}. \ 0$ the receiver set is empty, the rule (CongrB) drives the transition from (3.8) to (3.9);
the last transition from (3.9) to (3.10) is performed by applying the rule (Ev-out) where $v'$ is forwarded to nodes $\ell_s$ and $\ell_{pd}$.
Of course the transitions of part of the nodes are preserved by the whole system of nodes $N$, by applying the (ParN) rule.

\section{Control Flow Analysis}\label{sec:analysis}

This section introduces a static analysis for \IoTLySa\ that safely approximates the abstract behaviour of a system of nodes $N$, regardless of the environment that hosts it.
Our analysis tracks the usage of sensor values inside the local node where they are gathered and their propagation in the network of nodes both as row data or processed via suitable functions.
It also describes which messages a node can receive and from which nodes, so abstractly representing the communication structure of the system.

Technically, we define a Control Flow Analysis (CFA, for short), specified in terms of Flow Logic~\cite{NiNi02}, a declarative approach borrowing from and integrating many classical static techniques~\cite{CousotCousot77-1,Heintze:1994,piercetpl,Khedker:2009}.
The distinctive feature of Flow Logic is to separate the \emph{specification} of the analysis from its actual \emph{computation}.
Intuitively, the specification describes when its results, namely analysis \emph{estimates}, are {\em valid}.
Formally, a specification consists of a set of clauses defining the validity of estimates. 
Furthermore, Flow Logic provides us with a methodology to define a correct and efficient analysis algorithm, by reducing the specification to a constraint satisfaction problem.

Below, we specify our analysis in a logical form by inducing on the syntax of the constructs of \IoTLySa\ along the line of~\cite{BBDNN_JCS}. 
An algorithmic version of our analysis can be easily derived from its logical specification.
It suffices to induce on the syntax for generating a set of constraints in AFPL, a logic used to specify static analyses~\cite{NielsonAFPL}.
The generated constraints are solved in low polynomial time obtaining the minimal valid estimate through the succinct solver presented in~\cite{NielsonAFPL}.



\subsection{Abstract representation of data}
In the following we represent data generated by an IoT system as trees.
The leaves of such trees either identify the sensor from which an observable comes or represent a basic value, and the nodes represent the aggregation functions applied to data.
Since a system is designed to be continuously active and may contain feedback loops, aggregated data can grow unbounded, and so are the trees used to abstractly represent them.
In order to have a finite representation we shall rely here on regular tree grammars~\cite{tata2007}, rather than coalescing all trees exceeding a given depth in a special abstract value as in~\cite{BDFG_Coord16,BDFG_ICE2016}.
These new abstractions better approximate actual data, and the results of the analysis are accordingly more precise.
\medskip

\noindent
A {\em regular tree grammar} is a quadruple $\widehat{G} = ({\mathbb N}, {\mathbb T}, Z, R)$ where
\begin{itemize}
\item  ${\mathbb N}$ is a set of non-terminals (with rank 0),
 \item  ${\mathbb T}$ is a ranked alphabet, whose symbols have an associated arity,
 \item $Z \in {\mathbb N}$ is the starting non-terminal,
\item $R$ is a set of productions of the form $A \rightarrow t$, where $t$ is a tree composed from symbols in ${\mathbb N} \cup {\mathbb T}$ according to their arities.
\end{itemize}
\noindent
In the following we denote the language generated by a given grammar $\widehat{G}$ with $Lang(\widehat{G})$.

Given a system of nodes $N$, the grammars we use will have the alphabet $\mathbb T$ consisting of the following set of ranked symbols 
\begin{itemize}
\item $i^{\ell}$ (with arity 0) for each sensor  $i \in {\mathcal I}_\ell$
\item $v^{\ell}$ (with arity 0) for each node $\ell \in {\mathcal L}$
\item $f^\ell$ (with arity r)  for each function  $f \in {\mathcal F} \text{ and } \ell \in {\mathcal L}$
\end{itemize}
The non-terminals $\mathbb N$ of our grammars include a symbol for each terminal, and just for readability we shall use their capital counterparts, i.e.\ $I^\ell, V^\ell$ and $F^\ell$.
In this way, the production $F^\ell \rightarrow f^\ell(t_1, ..., t_r)$ generates the tree rooted in $f^\ell$ and children generated by $t_1, ..., t_r$.

It is convenient introducing some notation.
For brevity and when no ambiguity may arise, we will simply write $\hat{v} = (Z, R)$ for the grammar $\widehat{G} = ({\mathbb N}, {\mathbb T}, Z, R)$ with starting non-terminal $Z$ and regular productions in $R$, without explicitly listing the terminals and the non-terminals.
Then, we denote with $\mathbb R$ the set of all possible productions over ${\mathbb N}$ and ${\mathbb T}$.

As an example of a possible infinite abstract tree, consider two nodes $N_{\ell_0}$ and $N_{\ell_1}$ with two aggregation functions $f$ and $h$.
Suppose that $N_{\ell_0}$ applies $f$ to a value read by the sensor $i_0$ and a value received from $N_{\ell_1}$.
Similarly, $N_{\ell_1}$ applies $h$ to a value read by the sensor $i_1$ and the value received from $N_{\ell_0}$.
The resulting value is abstracted as the possibly infinite binary tree in Figure~\ref{fig:abst-tree} that belongs to the language of the following grammar:
\[
(F^{\ell_0}, \{
F^{\ell_0} \rightarrow f^{\ell_0}(I^{\ell_0}_0, H^{\ell_1}),
I^{\ell_0}_0 \rightarrow i^{\ell_0}_0,
H^{\ell_1} \rightarrow h^{\ell_1}(I^{\ell_1}_1, F^{\ell_0}),
I^{\ell_1}_1 \rightarrow i^{\ell_1}_1
                  \}
)
\]

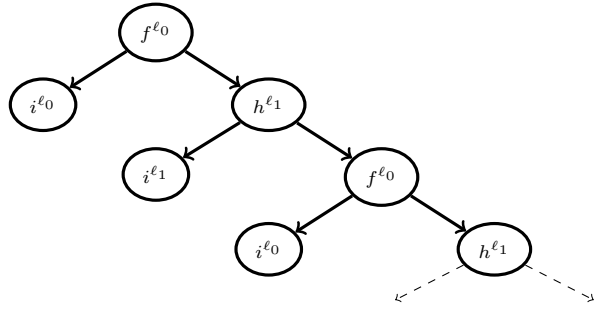
\begin{figure}
\begin{minipage}[c]{0.4\textwidth}
\begin{tikzpicture}[every node/.style={very thick,ellipse},align=center, node distance=0.2cm,auto,scale=0.1,font=\sc\tiny]

\node[draw]                              		 (C1)       {$f^{\ell_0}$};

\node[draw,below=of C1, xshift=-1.5cm]      (C2)       {$i^{\ell_0}$};

\node[draw,below=of C1, xshift=1.5cm]       (C3)       {$h^{\ell_1}$};

\node[draw,below=of C3, xshift=-1.5cm]                  		(C4)       {$i^{\ell_1}$};
\node[draw,below=of C3, xshift=1.5cm]                  		(C5)       {$f^{\ell_0}$};

 \draw[->,very thick]   (C1) -- node[above, xshift = -0.3cm, yshift=-0.3cm] {} (C2);

\draw[->,very thick]   (C1) -- node[above, xshift=0.3cm,yshift=-0.3cm] {} (C3);

 \draw[->,very thick]   (C3) -- node[above, xshift = -0.3cm, yshift=-0.3cm] {} (C4);

\draw[->,very thick]   (C3) -- node[above, xshift=0.3cm,yshift=-0.3cm] {} (C5);

\node[draw,below=of C5, xshift=-1.5cm]      (C6)       {$i^{\ell_0}$};
\node[draw,below=of C5, xshift=1.5cm]       (C7)       {$h^{\ell_1}$};

\draw[->,very thick]   (C5) -- node[above, xshift = -0.3cm, yshift=-0.3cm] {} (C6);

\draw[->,very thick]   (C5) -- node[above, xshift=0.3cm,yshift=-0.3cm] {} (C7);

\node[below=of C7, xshift=-1.5cm]                  		(C8)       {};
\node[ below=of C7, xshift=1.5cm]                  		(C9)       {};

\draw[->,dashed]   (C7) -- node[above, xshift = -0.3cm, yshift=-0.3cm] {} (C8);

\draw[->,dashed]   (C7) -- node[above, xshift=0.3cm,yshift=-0.3cm] {} (C9);

\end{tikzpicture}
\end{minipage}
\caption{An infinite abstract tree}
\label{fig:abst-tree}
\end{figure}

Now we are ready to introduce the \emph{abstract terms} that belong to the set
\[
\widehat{\mathcal V} = 2^{{\mathbb N} \times {\mathbb R}}
\]

\subsection{Specification of the analysis}\label{subsect:analysis}

The result of our CFA is a pair $(\widehat{\Sigma}, \Theta)$ for terms $E$ and a 
triple $(\widehat{\Sigma},\kappa,\Theta)$ for $N$,
called \emph{estimate} for $E$ and for $N$, respectively.
The components of an estimate are the following {\em abstract domains}:
\begin{itemize}
\item
{\it abstract store}
$
\widehat{\Sigma} = \bigcup_{\ell \in \mathcal L}\,\widehat{\Sigma}{_\ell}: {\mathcal X} \cup {\mathcal I}_\ell \ \rightarrow\  2^{\widehat{\mathcal V}}
$
where each {\em abstract local store} $\widehat{\Sigma}_{\ell}$ approximates the concrete local store $\Sigma_{\ell}$.
It associates with each location a set of abstract values representing the possible concrete values that the location may store at run time.
\item {\it abstract network environment} 
$\kappa : {\mathcal L} \ \rightarrow\  {\mathcal L} \times \bigcup_{i = 1}^m \widehat{\mathcal V}^i$ where $m$ is the maximum arity of the messages exchanged in the system under analysis (with
$\widehat{\mathcal V}^{i+1} = \widehat{\mathcal V} \times \widehat{\mathcal V}^i$).
Intuitively, $\kappa$ includes all the abstract messages that may be received by the node labelled $\ell$.
\item
{\em abstract data collection} $\Theta: {\mathcal L} \ \rightarrow\ 2^{\widehat{\mathcal V}}$
that, for each node labelled $\ell$, approximates the set of abstract values that the node handles.
\end{itemize}

\noindent
The syntax directed rules of \tablename s~\ref{analysisT} and~\ref{analysis} specify when an analysis estimate is valid and they are almost in AFPL format~\cite{NielsonAFPL}. 
For each term $E$, the judgement $\nEFORM{E}{\vartheta}$ expresses that $\vartheta \in  2^{\widehat{\mathcal V}}$ approximates the set of values that $E$ may evaluate to, given the component $\widehat{\Sigma}_{\ell}$ of the abstract store $\widehat{\Sigma}$.
A sensor identifier and a value evaluate to the set $\vartheta$, provided that their abstract representations belong to $\vartheta$ (rules (E-sen) and (E-val)).
This abstract representation is a grammar made of a non-terminal symbol whose production generates a tree with a single node.
For example, the abstract value for a sensor $i$ is $(I^\ell, \{I^\ell \rightarrow i^{\ell}\})$ that represents a grammar with the initial symbol is $I^\ell$ that only generates the tree  $i^\ell$.
Similarly in rule (E-var) where a variable $x$ evaluates to $\vartheta$, if this includes the set of values bound to $x$ in $\widehat{\Sigma_{\ell}}$.
The rule (E-fun) analyses the application of a $r$-ary function $f$ to produce the set $\vartheta$.
To do that
(i) for each term $E_i$, it finds the sets $\vartheta_i$, and
(ii) for all $r$-tuples of values $(\hat{v}_1,\cdots,\hat{v}_r)$ in $\vartheta_1\times\cdots\times\vartheta_r$, it checks if
$\vartheta$ includes the grammars with distinct symbol $F^\ell$ generating the trees rooted in $f^\ell$ with subtrees $\hat{v}_1,\cdots,\hat{v}_r$.
Also, in all the rules for terms, we require that the abstract data collection $\Theta(\ell)$ includes all the abstract values in $\vartheta$.

\begin{table*}[!tb]
\footnotesize
\begin{mathpar}
{
\inferrule[(E-sen)]%
{{(I^\ell, \{I^\ell \rightarrow i^{\ell}\})} \in \vartheta \subseteq  \Theta(\ell)}%
{\nEFORM{i}{\vartheta}}

\inferrule[(E-val)]%
{(V^\ell, \{V^\ell \rightarrow v^{\ell}\}) \in \vartheta \subseteq  \Theta(\ell)}%
{\nEFORM{v}{\vartheta}}

\inferrule[(E-var)]%
{\widehat{\Sigma}{_\ell}({x}) \subseteq \vartheta \subseteq  \Theta(\ell)}%
{\nEFORM{x}{\vartheta}}

}

\inferrule[(E-fun)]%
{ \bigwedge_{i=1}^r\,
   \nEFORM{E_i}{\vartheta_i} \; \wedge  \\ \\
   \forall\, (Z_1, R_1),\cdots, (Z_r, R_r): \ 
   \bigwedge_{i=1}^r\, (Z_i, R_i) \in \vartheta_i\ 
   \Rightarrow
   \ (F^\ell, \{F^\ell \rightarrow f^\ell(Z_1, \cdots, Z_r) \} \, \cup\, \bigcup_{i=1}^r R_i) \in \vartheta \subseteq  \Theta(\ell)
}%
{\nEFORM{f(E_1,\cdots,E_r)}{\vartheta}}

\end{mathpar}

\caption{Analysis of terms $\nEFORM{E}{\vartheta}$.}\label{analysisT}
\end{table*}

In the analysis of nodes we focus on which values can flow through the network and which can be assigned to variables.
The judgements have the form $\nNAFORM{N}$
and are defined by the rules in \tablename~\ref{analysis}. 
The rules for the inactive node (N-nil) and for parallel composition (N-par) are standard, as well as 
the rules (B-nil) and (B-par) for node components.
The rule (N-node) for a single node $\ell:[B]$ requires that its component $B$ is analysed, with the further judgment
$\nPAFORM{\ell}{B}$, where $\ell$ is the label of the enclosing node.
The rule (B-store) connects actual stores $\Sigma$ with abstract ones $\widehat{\Sigma}$ and it requires the locations of sensors to contain the corresponding abstract values.
The rule (B-sen) for sensors does not inspect their form, because we are only interested in who will use their values and this information can be retrieved by the abstract store. 
The same happens in rule (B-act) for actuators that in our model are passive entities.
Indeed, they obey commands of control processes and
to track the activities of actuators it thus suffices to consider the issued commands.
%
\begin{table*}
\footnotesize
\begin{mathpar}
{
\inferrule[(N-nil)]%
{ }%
{\nNAFORM{\NIL}}

\inferrule[(N-par)]%
{ \nNAFORM{N_1} \wedge \nNAFORM{N_2}}%
{\nNAFORM{N_1 \ | \ N_2}}

\inferrule[(N-node)]{\nPAFORM{\ell}{B}}{\nNAFORM{\ell:[B]}}

\inferrule[(B-nil)]%
{ }%
{\nPAFORM{\ell}{\NIL}}
}

{
\inferrule[(B-store)]%
{ \forall\, i \in {\mathcal I_\ell}. \ (I^\ell, \{I^\ell \rightarrow i^\ell\}) \in \widehat{\Sigma}_\ell(i)}
{\nPAFORM{\ell}{\Sigma}}

\inferrule[(B-sen)]%
{ }%
{\nPAFORM{\ell}{S}}

\inferrule[(B-act)]%
{ }%
{\nPAFORM{\ell}{A}}

\inferrule[(B-par)]%
{\nPAFORM{\ell}{B_1} \wedge \\\\ \nPAFORM{\ell}{B_2}}%
{\nPAFORM{\ell}{B_1 \| \ B_2}}

}

{
\inferrule[(P-out)]%
{
          \bigwedge_{i=1}^{k}\; \nEFORM{E_i}{\vartheta_i} \  \wedge \ 
          \nPAFORM{\ell}{P} \ \wedge 
          \forall \hat{v}_1,\cdots,\hat{v}_r:\; \bigwedge_{i=1}^r\, \hat{v}_i \in \vartheta_i\
          \Rightarrow
          \forall \ell' \in L:
         (\ell,  \mess{\hat{v}_1,\cdots,\hat{v}_r}) \in  \kappa(\ell')           
}
{\nPAFORM{\ell}{\OUTM{E_1,\cdots,E_r}{L}.\,P}}

}%

{%
\inferrule[(P-in)]%
{ \bigwedge_{i=1}^{j}\;  \nEFORM{E_i}{\vartheta_i}  \ \wedge \ 
          \forall (\ell_2, \mess{ \hat{v} _1,\cdots,\hat{v}_r}) \in \kappa(\ell_1):\;Comp(\ell_2,\ell_1) 
\Rightarrow
\left(          \bigwedge_{i=j+1}^{k}\;  \hat{v}_i \in \widehat{\Sigma}_{\ell_1}({x_i})\ \wedge \
           \nPAFORM{\ell_1}{P} 
\right)}%
{\nPAFORM{\ell_1}{\INPS{E_1,\cdots,E_j}{x_{j+1},\cdots,x_r}{P}}}

}

\inferrule[(P-ass)]%
{\nEFORM{E}{\vartheta} \  \wedge \\\\ \forall \, \hat{v} \in \vartheta \  \Rightarrow
 \hat{v}  \in \widehat{\Sigma}_{_\ell}(x) \ \wedge
\nPAFORM{\ell}{P}}
{\nPAFORM{\ell}{x : = E}.\,{P}}

\inferrule[(P-cond)]%
{\nEFORM{E}{\vartheta} \; \wedge \;  \nPAFORM{\ell}{P_1} \; \wedge \ \nPAFORM{\ell}{P_2}}
{\nPAFORM{\ell}{E?P_1 : P_2}}


\inferrule[(P-act)]%
{\nPAFORM{\ell}{P}}
      {\nPAFORM{\ell}{ \OUTS{j,\gamma}{P}}}
      
\inferrule[(P-rec)]%
{\nPAFORM{\ell}{P}}%
{{\nPAFORM{\ell}{\mu h. \ P}}}

\inferrule[(P-var)]%
{\nPAFORM{\ell}{P}}%
{{\nPAFORM{\ell}{h}}}

\end{mathpar}

\caption{Analysis of nodes $\nNAFORM{N}$, and of node components
$\nPAFORM{\ell}{B}$.}\label{analysis}
\end{table*}

The rules for processes are in the lower part of \tablename~\ref{analysis}, and all require that an estimate is valid for the immediate sub-processes.
The rule (P-out) for $r$-ary multi-output
(i)  finds the sets $\vartheta_i$, for each term $E_i$; and 
(ii) for all $r$-tuples of values $(\hat{v}_1,\cdots,\hat{v}_r)$ in $\vartheta_1 \times\cdots\times \vartheta_r$, it checks if
they belong to $\kappa(\ell' \in L)$, i.e.\ if these $r$-tuples of values can be received by the nodes with labels in $L$.
In the rule (P-in) for input the terms $E_1, \cdots, E_j$ are 
used for matching values sent on the network: 
this rule (i) checks whether 
 the first $j$ terms have valid estimates $\vartheta_{i}$; and
(ii) for each message $(\ell_2, \mess{\hat{v}_1,\cdots,\hat{v}_j,\hat{v}_{j+1},\ldots,\hat{v}_r})$ in $\kappa(\ell_1)$ 
(i.e.~in any message predicted to be receivable by the node with label $\ell_1$), it checks that 
the values $\hat{v}_{j+1},\ldots,\hat{v}_r$ are included in the estimates for the variables $x_{j+1},\cdots,x_r$, provided that the two nodes can communicate ($Comp(\ell_2,\ell_1)$).
The rule (P-ass) for assignment
requires that $\widehat{\Sigma}_\ell(x)$ includes all the values $\hat{v}$ in $\vartheta$, the estimate for $E$.
The rule (P-cond) is as expected and the rule (P-act) is trivial.
Finally, the rules (P-rec) and (P-var) for iteration are standard, where to save notation, we assumed that each variable $h$ is uniquely bound to the body $P$.

To show how our analysis works, consider again the example in Section~\ref{sec:example} and the process 
$P_{cp} = \mu h. (z := 1).(z' := noiseRed(z)).\OUTM{z'}{\{\ell_a\}}. \ h$.
Moreover, let $\iota$ and $\nu$ be 
\begin{align}
& \iota = (\mathcal{I}^{cp}, \{\mathcal{I}^{cp} \rightarrow 1^{cp}\})\label{eq:iota}\\
& \nu = (\textsc{NoiseRed}^{cp}, \{\textsc{NoiseRed}^{cp} \rightarrow noiseRed^{cp}(\mathcal{I}^{cp}), \mathcal{I}^{cp} \rightarrow 1^{cp}\})\label{eq:xi}
\end{align}
abstracting values of the camera $S_{cp}$ and its elaboration, respectively.
Every valid CFA estimate must include at least the following entries:
%
\begin{align*}
& (a)\, \widehat{\Sigma}_{\ell_{cp}}(z) \supseteq \{ \iota \}
&& (b)\, \widehat{\Sigma}_{\ell_{cp}}(z') \supseteq \{\iota, \nu \}
\\
& (c)\, \Theta(\ell_{cp}) \supseteq \{\iota, \nu \} 
&& (d)\, \kappa(\ell_a) \supseteq \{ (\ell_{cp},\mess{\nu})\}
\end{align*}
Indeed, all the following checks must succeed:
\begin{itemize}
\item
$\nPAFORM{\ell_{cp}}{\mu h. (z := \underline{1}).(z' := noiseRed(z)).\OUTM{z'}{\{\ell_a\}}.h}$ because

\item $\nPAFORM{\ell_{cp}}{ (z := \underline{1}).(z' := noiseRed(z)).\OUTM{z'}{\{\ell_a\}}}$, that in turn holds
\item because (i) $\iota$ is in $\widehat{\Sigma}_{\ell_{cp}}(z)$ by ({\it a}) ($\nEFORM{\underline{1}}{\vartheta} \ni \iota$); and because 


(ii) $\nPAFORM{\ell_{cp}}{ (z' := noiseRed(z)).\OUTM{z'}{\{\ell_a\}}}$, that in turn holds
\item
because (i) $\nu$ is in $\widehat{\Sigma}_{\ell_{cp}}(z')$ by ({\it b}) since
 
 $\nEFORMM{\ell_{cp}}{noiseRed(z)}{\vartheta}$ $\ni \nu$; and because

(ii) $\nPAFORM{\ell_{cp}}{\OUTM{z'}{\{\ell_a\}}}$ that holds because
$(\ell_{cp},\mess{\nu})$ is in $\kappa(\ell_a)$ by ({\it d}).
\end{itemize}

\noindent
The precision of the CFA above can be refined by replacing the abstract store $\widehat{\Sigma}$ with the pair
 $\widehat{\Sigma}_{in}, \widehat{\Sigma}_{out}$.
 This extension requires a more verbose specification of the rules for accurately handling the store updates, similarly to the treatment of side effects in~\cite{NiNi02}.
We can obtain a further improvement of the precision by making the analysis more context-sensitive.
In particular, an additional component can record the sequence of choices made in conditionals while traversing the node under analysis.
One can thus obtain better approximations of the store or detect causal dependencies among the data sent by sensors and the actions carried out by actuators, as well as casuality among nodes.

\subsection{Correctness of the analysis.}
Our CFA respects the operational semantics.
The proof of this fact benefits from an instrumented denotational semantics for expressions.
For that we will introduce a set of trees $\widehat{\mathcal{T}}$, ranged over by $\hat{t}$, $\hat{t'}$, \dots, built over the ranked alphabet $\mathbb{T}$ introduced above. 
The values of expressions become now pairs $\langle v, \hat{t}\rangle$, and the store and its updates are accordingly extended.
The instrumented local store then becomes
$
\Sigma^{i}_{\ell} : \mathcal{X}_{\ell} \cup \mathcal{I} _\ell\ \rightarrow \mathcal{V} \times \widehat{\mathcal T}.
$ 
We also endow $\Sigma^i_\ell$ with an undefined value $\bot$ for recording when a sensor or a variable is not initialised.
Accordingly, we assume that the semantic function $measure(i,\mathcal{E})$ returns a pair $(v, i^\ell)$.
Finally, when no ambiguity arises, we shall overload the symbol $v$ to also denote the instrumented values.

The formal definition of the instrumented denotational semantics follows, where we indicate with $\downarrow_i$ the projection on the $i^{th}$ component of the pair.
\[
\begin{array}{lll}
\dsem{v}^i_{\Sigma^{i}_{\ell}} & = & (v, v^\ell)
\\
\dsem{i}^i_{\Sigma^{i}_{\ell}} & = & \Sigma_{\ell}^i(i)
\\
\dsem{x}^i_{\Sigma^{i}_{\ell}} & = & \Sigma_{\ell}^i(x)
\\
\dsem{f(E_1,\cdots, E_r)}^i_{\Sigma^{i}_{\ell}} & = & 
(f(\dsem{E_1}^i_{\Sigma^{i}_{\ell} \downarrow_1},\cdots, \dsem{E_r}^i_{\Sigma^{i}_{\ell} \downarrow_1}),
f^{\ell}(\dsem{E_1}^i_{\Sigma^{i}_{\ell} \downarrow_2},\cdots,\dsem{E_r}^i_{\Sigma^{i}_{\ell} \downarrow_2}))
\end{array}
\]
%

\noindent
Since the CFA only considers the second component of the extended store, we need to define when the concrete and the abstract stores agree.

\begin{definition}
Given a concrete store $\Sigma_\ell^i$ and an abstract store $ \widehat{\Sigma}_\ell$, we say that
they {\em agree}, in symbols
$\Sigma_\ell^i \bowtie \widehat{\Sigma}_\ell$, if and only if for all
$w \in \mathcal{X}_{\ell} \cup \mathcal{I} _\ell$ either $\Sigma^i_\ell (w) = \bot$
or there exists $\widehat{G} \in \widehat{\Sigma}_\ell(w)$ such that 
$(\Sigma^i_\ell (w))_{\downarrow_2} \in  Lang(\widehat{G})$.
\end{definition}
\noindent
Just to give an intuition, suppose that the expression $E$ is such that $\dsem{E}_{\Sigma^i_\ell}^i  =  (v, v^\ell)$. 
Then, the assignment $x : = E$ will result in the updated store $\Sigma^i_\ell \{(v, v^\ell)/x\} $.
Clearly, the standard semantics of expressions used in \tablename~\ref{opsem} is obtained by the projection on the first component of the instrumented one.
In our running example, the assignment $z' := noiseRed(z)$ of the process $P_{cp}$ stores the pair $(v, noiseRed^{\ell_{cp}}(1^{\ell_{cp}}))$ made of the actual value $v$ and of its abstract counterpart.

The following theorem establishes the correctness of our CFA in that its valid estimates are preserved under reduction steps.

\begin{restatable}[Subject reduction]{thm}{subjectreduction}\label{SRtheorem}
Given a system of nodes $N$, if 
$\nNAFORM{N}$; $N \sfreccia N'$ and $\,\forall\,\Sigma_\ell^i$ in $N$ it is $\Sigma_\ell^i \bowtie \widehat{\Sigma}_\ell$, 
then $\nNAFORM{N'}$ and $\,\forall\,  \Sigma{_\ell^i}'$ in $N'$ it is $\Sigma{_\ell^i}' \bowtie \widehat{\Sigma}_\ell$.
\end{restatable}

We also prove that the set of valid estimates to the specification in \tablename~\ref{analysis} is never empty and that a minimal one always exists.
This is because estimates form a Moore family ${\mathcal M}$, i.e.\ a set with a greatest element ($\sqcup \emptyset$) and a least element ($\sqcup {\mathcal M}$). 

\begin{restatable}[Existence of estimates]{thm}{extsolutions}\label{Mooretheorem}
Given $N$, its valid estimates form a Moore family.
\end{restatable}

We now illustrate the role that the components of a valid estimate play in predicting the behaviour of the analysed system.
The following corollary follows from the fact that the analysis respects the operational semantics, as stated in Theorem~\ref{SRtheorem}.
The first item below makes it evident that our analysis determines whether the value of a term may indeed be used along the computations of a system, and clarifies the role of the component $\Theta$;
the second item guarantees that $\kappa$ predicts all the possible inter-node communications.
In the statement of this corollary we use the following notation for reductions, where we omit writing the environments for readability.
Let $N \xrightarrow{E_1, ...,E_r}_\ell N'$ denote a reduction in which all the expressions $E_i$ are evaluated at node $\ell$ and let
$N \xrightarrow{\mess{v_1,\dots,v_r}}_{\ell_{1}, \ell_{2}} N'$ be a reduction in which the message sent by node $\ell_1$ is received by node $\ell_2$.

\begin{restatable}{coro}{corTheta}\label{cor:Theta}
Assume that $\nNAFORM{N}$, then
\begin{enumerate}
\item
if $N \xrightarrow{E_1, ...,E_r}_\ell  N'$ then $\forall k \in [0,n]$ 
there exists $\widehat{G} \in \Theta(\ell)$ such that \\
$(\dsem{E}^i_{\Sigma^i_\ell})_{\downarrow_2} \in Lang(\widehat{G})$;
\item
if $N \xrightarrow{\mess{v_1,\dots,v_r}}_{\ell_1,\ell_2} N'$ then
it holds $(\ell_{1},\mess{\hat{v}_1,\dots,\hat{v}_r}) \in \kappa(\ell_{2})$, 
and $\forall i \in [0,r]$ there exists $\widehat{G} \in \hat{v}_i$ such that
$v_{i\downarrow_2} \in Lang(\widehat{G})$.
\end{enumerate}
\end{restatable}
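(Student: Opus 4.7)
The plan is to derive both items from the subject reduction theorem (Theorem~\ref{SRtheorem}) together with a single auxiliary lemma about the instrumented evaluation of terms. By hypothesis $\nNAFORM{N}$, and one may assume without loss of generality that the initial stores in $N$ satisfy $\Sigma_\ell^i \bowtie \widehat{\Sigma}_\ell$ (indeed, the rule (B-store) forces this for sensor locations, and uninitialised variables are handled by the $\bot$ convention of the agreement relation). Subject reduction then guarantees that every intermediate store reached along a computation still agrees with $\widehat{\Sigma}_\ell$, so the agreement hypothesis is always available at the reduction step under analysis.

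The key auxiliary step is the following: if $\Sigma_\ell^i \bowtie \widehat{\Sigma}_\ell$ and $\nEFORM{E}{\vartheta}$, then there exists $\widehat{G} \in \vartheta$ such that $(\dsem{E}^i_{\Sigma^i_\ell})_{\downarrow_2} \in Lang(\widehat{G})$. I would prove this by structural induction on $E$, matching each clause of Table~\ref{analysisT}. For the base cases, rule (E-val) puts the grammar $(V^\ell,\{V^\ell\to v^\ell\})$ into $\vartheta$ and the instrumented semantics returns exactly the tree $v^\ell$; rule (E-sen) is analogous; for (E-var) the claim is immediate from the agreement relation. For (E-fun), the induction hypothesis yields grammars $\widehat{G}_i=(Z_i,R_i)\in\vartheta_i$ whose languages contain each $(\dsem{E_i}^i_{\Sigma^i_\ell})_{\downarrow_2}$; the universally quantified premise of (E-fun) then forces the grammar $(F^\ell,\{F^\ell\to f^\ell(Z_1,\dots,Z_r)\}\cup\bigcup_i R_i)$ into $\vartheta$, and this grammar generates precisely the tree $f^\ell(\dsem{E_1}^i_{\downarrow_2},\dots,\dsem{E_r}^i_{\downarrow_2})$ that the instrumented semantics of $f(E_1,\dots,E_r)$ produces.

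With this lemma in hand, item (1) is immediate: any reduction of the form $N\xrightarrow{E_1,\dots,E_r}_\ell N'$ must fire a rule of Table~\ref{opsem} whose validity (via (P-out), (P-ass), (P-cond), or (P-in)) requires $\nEFORM{E_k}{\vartheta_k}$ with $\vartheta_k\subseteq\Theta(\ell)$; applying the lemma to each $E_k$ gives the desired $\widehat{G}\in\Theta(\ell)$.

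For item (2), the reduction must be an instance of (Multi-com), where node $\ell_1$ offers $\mess{v_1,\dots,v_r}$ and node $\ell_2\in L$ consumes it. The validity of the originating output was established by rule (P-out), which, together with $\nEFORM{E_i}{\vartheta_i}$ and the lemma above, produces abstract values $\hat{v}_i\in\vartheta_i$ with $(v_i)_{\downarrow_2}\in Lang(\widehat{G})$ for some $\widehat{G}\in\hat v_i$, and forces $(\ell_1,\mess{\hat{v}_1,\dots,\hat{v}_r})\in\kappa(\ell')$ for every $\ell'\in L$, in particular for $\ell_2$. The main obstacle is plumbing: one must be careful that the output prefix is still recorded in the component set under the structural congruence ($\bigstar$) and that the spawned auxiliary process $\OUTM{v_1,\dots,v_r}{L}.\NIL$ carries forward the same abstract values — this is exactly what the statement of Theorem~\ref{SRtheorem} is designed to guarantee, so once subject reduction is in place the corollary is just a matter of reading off the validity clauses applicable to the single reduction step under consideration.
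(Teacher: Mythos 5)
Your proposal is correct and follows essentially the same route as the paper: your auxiliary claim about instrumented term evaluation is exactly the paper's Lemma~\ref{SR4E}, your case analysis for item (1) mirrors the paper's component-level Lemma~\ref{cor:ThetaB} (which reads off the premises of (P-ass), (P-out), (P-cond), (P-dec)), and item (2) is obtained, as in the paper, from the (Multi-com) case of Theorem~\ref{SRtheorem}. The only cosmetic difference is that the paper packages the structural plumbing (ParN, CongrN, Node) as an explicit induction on the derivation using the congruence lemma, which you leave implicit.
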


\noindent
Back again to our example at the end of sub-section~\ref{subsect:analysis},
 consider the assignment $z := 1$ in the process $P_{cp}$.
We have $(\dsem{1}^1_{\Sigma^1_{\ell_{cp}}})_{\downarrow_2} = 1^{\ell_{cp}}$
where $v$ is the actual value received by the first sensor, and 
that our analysis computes $\iota \in \theta(\ell_{cp})$, where $\iota$ is defined in~(\ref{eq:iota}). 
It is immediate to see that $1^{\ell_{cp}} \in Lang(\iota)$.
A similar reasoning can be done also for messages from $\ell_{cp}$ to $\ell_{a}$. 
Indeed, we have $(\dsem{z'}^1_{\Sigma^1_{\ell_{cp}}})_{\downarrow_2} = noiseRed^{\ell_{cp}}(1^{\ell_{cp}})$ and that our analysis computes 
$(\ell_{cp},\mess{\nu}) \in \kappa(\ell_{a})$, where $\nu$ is defined in~(\ref{eq:xi}).
It is immediate to see that $noiseRed^{\ell_{cp}}(1^{\ell_{cp}}) \in Lang(\nu)$.

\subsection{Some applications of the analysis }

To give an idea on how the outcome of the analysis can be used to detect where and how data are manipulated and how messages flow in a system, we introduce below a couple of simple properties.

A designer could be interested in checking whether a value taken by a specific sensor of a node will be used as an ingredient of the data of a different node.
This is formalised as follows.

\begin{definition}
Let $N$ be the node $\ell:[B]$ containing the sensor $i$.
\\
The sensor $i$ is an {\em ingredient} of a node $N'$ with label $\ell'$ 
if and only if $N' \xrightarrow{E_1, ...,E_r}_{\ell '} N''$ and there exists $k \in [1,r]$ such that 
$(\dsem{E}^i_{\Sigma^i_\ell})_{\downarrow_2}$ is a tree with a leaf $i^\ell$.
\end{definition}

The following property is an immediate consequence of Corollary~\ref{cor:Theta}.

\begin{restatable}[]{prope}{ingredient}\label{Ingredienttheorem}
Let $i$ be a sensor of the node $N$ with label $\ell$; let $N'$ be a node with label $\ell'$; 
and let $\nNAFORM{N \mid N'}$.
\\
The sensor $i$ is an ingredient of $N'$ if there exists $\widehat{G} \in \Theta(\ell')$ and a tree in its language with a leaf $i^\ell$.
\end{restatable}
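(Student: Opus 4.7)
The plan is to combine Corollary~\ref{cor:Theta}(1) with a structural inspection of how grammars accumulate in $\Theta(\ell')$ through the rules in \tablename~\ref{analysisT} and \tablename~\ref{analysis}. The intuition is that a leaf $i^\ell$ can appear in a tree $t \in Lang(\widehat{G})$ with $\widehat{G} \in \Theta(\ell')$ only because some term $E$ syntactically occurring in node $\ell'$ can be traced back, through function applications, assignments and inter-node communications, to sensor $i$ of node $\ell$; the corollary is then used in the forward direction to exhibit a concrete reduction of $N'$ that produces exactly such a tree, matching the definition of \emph{ingredient}.

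First I would argue, by inspection of the term rules (E-sen), (E-val), (E-var) and (E-fun), that the grammar $\widehat{G}$ is in $\Theta(\ell')$ because it belongs to the estimate $\vartheta$ of some term $E$ appearing in node $\ell'$. The construction in (E-fun) makes the internal nodes $f^{\ell'}$ of trees in $Lang(\widehat{G})$ mirror the function applications in $E$, and their leaves have shape $v^{\ell'}$ (from (E-val)), $j^{\ell'}$ (from (E-sen) when $E$ reads a sensor of $\ell'$) or are inherited from $\widehat{\Sigma}_{\ell'}(x)$ for some variable $x$ in $E$ (from (E-var)). Hence a leaf $i^\ell$ with $\ell \neq \ell'$ must have entered $\widehat{\Sigma}_{\ell'}(x)$ through (P-ass) or (P-in); the latter requires in turn an entry of $\kappa(\ell')$ added by some (P-out) applied to a term of a node whose estimate already carries a grammar with leaf $i^\ell$. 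Unrolling this chain reconstructs a syntactic propagation path from sensor $i$ of $\ell$ up to the term $E$ in $\ell'$.

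Second, I would exhibit a concrete reduction of $N \mid N'$ realising this path. Since $\nNAFORM{N \mid N'}$, Theorem~\ref{SRtheorem} preserves the estimate along reductions, so the steps identified above can be scheduled one after the other: the sensor reading at $\ell$, the assignments propagating it, the multi-output/multi-input communications moving it into $\ell'$, and finally the reduction step $N' \xrightarrow{E_1,\dots,E_r}_{\ell'} N''$ in which $E$ occurs as some $E_k$. Applying Corollary~\ref{cor:Theta}(1) to this last step yields $(\dsem{E_k}^i_{\Sigma^i_{\ell'}})_{\downarrow_2} \in Lang(\widehat{G}')$ for some $\widehat{G}' \in \Theta(\ell')$; combined with the structural correspondence established in the previous paragraph, this instrumented tree inherits the leaf $i^\ell$, which is exactly what the definition of $i$ being an ingredient of $N'$ requires.

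The hard part will be bridging the may-analysis nature of the CFA with the existential semantics of \emph{ingredient}: grammars in $\Theta(\ell')$ record every potentially reachable value but do not, on their own, certify reachability. I would address this by observing that the rules of \tablename~\ref{analysis} add a grammar to $\Theta(\ell')$ only for terms on syntactically executable paths (the guards of (P-cond) do not affect $\Theta$-accumulation but only determine which branch is taken at run time), so that the hypothesis $\nNAFORM{N \mid N'}$ witnesses a consistent closed system in which the required path can actually be scheduled. One also needs the tree $t \in Lang(\widehat{G})$ containing $i^\ell$ to admit a finite derivation, which holds for regular tree grammars, guaranteeing that the traceback construction terminates.
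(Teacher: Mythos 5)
There is a genuine gap, and it sits exactly where you flagged the ``hard part''. You read the proposition as a completeness claim --- from a grammar $\widehat{G} \in \Theta(\ell')$ generating a tree with leaf $i^\ell$, reconstruct and then schedule a concrete run of $N \mid N'$ realising it --- and this is not something the CFA can deliver. First, by Theorem~\ref{Mooretheorem} the valid estimates form a Moore family, so enlarging $\Theta(\ell')$ of a valid estimate with an arbitrary extra grammar (in particular one whose language has a tree with leaf $i^\ell$) yields another valid estimate; the hypothesis $\nNAFORM{N \mid N'}$ therefore gives you no syntactic propagation chain to unroll at all. Second, even restricting to the minimal estimate, your claim that grammars enter $\Theta(\ell')$ ``only for terms on syntactically executable paths'' is not true in the sense you need: rule (P-cond) analyses both branches regardless of the runtime value of the guard, and (P-in)/(P-dec) require the continuation to be analysed for every compatible abstract message in $\kappa$, whether or not the matching ever succeeds at run time, so unreachable code contributes to $\Theta(\ell')$. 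Hence ``static witness $\Rightarrow$ ingredient'' fails for an over-approximating analysis, and no scheduling argument via Theorem~\ref{SRtheorem} can repair it.

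The paper's own proof is the single word ``Immediate'', and it only makes sense for the converse direction: unfold the definition of ingredient to obtain a reduction $N' \xrightarrow{E_1,\dots,E_r}_{\ell'} N''$ in which some $(\dsem{E_k}^i_{\Sigma^i_{\ell'}})_{\downarrow_2}$ is a tree with leaf $i^\ell$, and apply Corollary~\ref{cor:Theta}(1) to that very step to get $\widehat{G} \in \Theta(\ell')$ whose language contains that tree. So the provable (and verification-relevant) content is ``ingredient $\Rightarrow$ witness in $\Theta(\ell')$'', whose contrapositive lets a designer certify that sensor $i$ is \emph{not} used by $\ell'$ whenever no grammar in $\Theta(\ell')$ generates a tree with leaf $i^\ell$; the ``if'' in the statement is best read as ``only if''. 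With the statement oriented that way, the appeal to Corollary~\ref{cor:Theta}(1) that you already make is the entire proof, and the traceback-and-scheduling construction in your first two paragraphs can be dropped.
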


Consider again our example at the end of sub-section~\ref{subsect:analysis}, for which estimates are such that
 $\Theta(\ell_{a}) \supseteq \{ \nu \}$ (because $ \kappa(\ell_a) \supseteq \{ (\ell_{cp},\mess{\nu})\}$), showing that the sensor $1^{cp}$ is an ingredient of the node $N_a$.

Note that the schema above can be used to verify other properties by inspecting the component $\Theta$.
It suffices to express properties of tree languages and check them with the standard tools, e.g.\ tree automata model checking.

Our next example works over the problem of checking the robustness of the communications in the smart street light example discussed at the end of Section~\ref{sec:example}, where we assumed the lamp post $\ell_q$ to be our of order.

Consider the following abstract values, each for every $p$: 
\[
 \nu^c_p = (\textsc{Car}^{p}, \{\textsc{Car}^{p} \rightarrow car^{p}\}) \qquad \qquad
 \nu^w_p = (\textsc{True}^{p}, \{\textsc{True}^{p} \rightarrow true^{p}\}) 
\]
the first is used to signal the presence of a car, the second one represents a boolean for the pedestrian. 
Every valid CFA estimate for our example includes in the component $\kappa$ the following entries for all lamp posts $p$:
\[
\kappa(\ell_p) \supseteq \{ (\ell_{p-1}, \mess{\nu^c_{p-1}}),\, (\ell_{p-1}, \mess{\nu^w_{p-1}}),\, (\ell_{p+1}, \mess{\nu^w_{p+1}}) \}
\]
In particular, this holds also for the lamp post $\ell_{q+1}$ and the designer can check that the messages $(\ell_{q}, \mess{\nu^c_{q}}),\, (\ell_{q}, \mess{\nu^w_{q}})$ are never received because the lamp post $\ell_q$ is out of order.

If needed, the designer can make this situation more explicit, by encoding the fault of $\ell_q$ in the compatibility function.
To this aim, it suffices to modify the function only for the faulty lamp post, by putting $Comp(\ell_q,\ell_p) = \mathit{false}$ for all lamp posts.
Then, the designer can re-apply the analysis and obtain that $(\ell_{q}, \mess{\nu^c_{q}}),\, (\ell_{q}, \mess{\nu^w_{q}}) \notin \kappa(\ell_{q+1})$, from which deducing
that these messages will never be received by the lamp posts from position $q+1$ onwards.

\section{CFA at work: verifying security policies}\label{sec:security}

Security is a crucial issue in IoT since it ``deals not only with huge amount of sensitive data (personal data, business
data, etc.) but also has the power of influencing the physical environment
with its control abilities''~\cite{IERC}.
Traditional security risks are magnified~\cite{Bertino1} in the IoT world, where opportunities for potential breaches dramatically increase hand in hand with the number of connected {\em things}.
Actually, security concerns both the physical and the logical facets of cyber-physical systems and poses manifold challenges.
However, these aspects have not received much attention so far.
Counter-measures against a physical attack are very hard to implement, 
sensors are often in not protected locations and can thus easily be damaged and their transmissions disrupted.
Yet more standard, security of the logical components raises new issues, mainly because IoT systems must perform all in a lightweight way.

Here we extend our proposal with encryption and decryption constructs for representing and handling some aspects of security from the side of applications.
Encryption is indeed the traditional mechanism to provide security, but it is very expensive in a setting where energy, computation and storage are rather limited.
Our CFA may help IoT designers in singling out the pieces of information to be kept confidential and in detecting whether the security assets are protected inside the system. 
In addition, we will show that a static check suffices to detect if certain data are propagated within a system of nodes according to a given policy.
In the first part of this section we propose a general schema that captures a family of security and access control policies.
We then instantiate the schema to a couple of classical security notions and a more sophisticated policy that controls how information is propagated.

%

In particular, we show how our CFA can address data security, by detecting leakages, once values and data are partitioned in confidential and public.
The designer of a system of nodes can then suitably protect sensitive information, by only encrypting their relevant parts, typically the values coming from sensors or from data aggregations.
In addition, we re-cast the classical no-read-up/no-write down in our framework.
Indeed, we can then statically check if information is exchanged between nodes in the right direction with respect to their level of clearance.
Finally, we instantiate the general schema with a finer policy that restricts the propagation of certain confidential data only within a specific sub-system.
We show that our CFA suffices to detect whether some confidential information unintentionally reaches a node that should not get it.
Remarkably, verifying the above properties, and others, is done on the estimates of our CFA that are computed once and for all.

We are here assuming a symmetric encryption schema, which is the one commonly used in IoT systems because less energy consuming; however, using public key  encryption will only require slight changes in the definitions below.
Formally, the encryption of a message $m$ under a key $k$ can be defined as the result of an aggregation function $encrypt(m,k)$ and the decryption as the application of its inverse $encrypt^{-1}(x,k)$.
The machinery developed so far would then formally work as it is. 
However the importance of security calls for an explicit treatment of its aspects, described in this section.

Our current analysis is restricted to the system of nodes under investigation, and does not consider the possible presence of active malicious attackers that can intervene in communications by injecting forged messages.
We do not feel these restrictions too heavy.
One reason is that in many real IoT systems, e.g.\ those developed with Zigbee~\cite{ZigBee}, keys are seldom exchanged in communications.
In this real world systems cryptographic keys, if present at all, are often fixed once and for all and exchanged in a secure manner when the system starts its setup or upon a reboot.
Also, it is not difficult to handle the case when the design of the system benefits from a key distribution.
Indeed, one can re-use existent techniques for doing that, e.g.\ based on CFA~\cite{BBDNN_JCS, NielsonZigB}, through which 
also active attackers are implicitly included.


\subsection{Extending the language and the analysis}
We extend here the syntax of processes and terms with encryption and decryption primitives as in {\LySa}, taking for simplicity a symmetric encryption schema.
As argued above, we feel free to assume as given a finite set $\mathcal K$ of secret keys owned by nodes, previously exchanged in a secure way.

Terms now include also encryptions written
\[
\{E_1,\cdots,E_r\}_{k}
\]
the result of which comes from encrypting the values of the expressions $E_i$, for $i \in [1,r]$, under the shared key $k$ in $\mathcal K\!$.

The syntax of processes now has also the following construct
\[
 \DECSO{E}{E_1,\cdots,E_j}
      {x_{j+1},\cdots,x_r}{k}{}{P}
\]
This process receives a message encrypted with the shared key $k \in \mathcal K\!$.
Also in this case we use pattern matching, as explained below.

The new decryption construct requires the following semantic rule:
\medskip
\[
\inferrule[(Decrypt)]%
{ \dsem{E}_{\Sigma} = \{v_1,\cdots,v_r\}_{k} \ \wedge \
  \bigwedge_{i=1}^{j} {v_i}= \dsem{E'_i}_{\Sigma}}
{\Sigma \ \| \ \DECSO{E}{E'_1,\cdots,E'_j}{x_{j+1},\cdots,x_r}{k}{}{P} \| \ B
\; \nfreccia \; 
\Sigma\{ v_{j+1}/x_{j+1},\cdots,v_r/x_r\} \ \| \  P \ \| \ B }
\]

In this inference rule the encryption $ \{v_1,\cdots, v_r\}_{k}$, resulting from $E$, has to match against the pattern in $\mathsf{decrypt}\;{E}\;\mathsf{as}\; \{E'_1,\cdots,E'_j ; x_{j+1},\cdots,x_r\}_{k}$
$\mathsf{in}\;{P}$, i.e.\ the first $j$ values $v_i$ must match those of the corresponding ${E'_i}$.
In addition the keys must be the same (this models \emph{perfect} symmetric cryptography). 
When all the above holds, the values of the remaining expressions are bound to the corresponding variables.

In order to extend the analysis for each node $\ell$, we first add to the alphabet $\mathbb{T}$ the following new symbols, with elements:
\begin{itemize}
\item 
$enc^\ell_r$, each with rank $r$ ranging from 2 to $m$, where $m-1$ is the maximum arity of the encrypted terms used in the system;
\item
$k_i$, for each key $k_i \in \mathcal{K}$.
\end{itemize}

\noindent
As done above, we assume that there is a non-terminal for each of the new terminals, written with capital letters.
It is convenient to define the following auxiliary function \texttt{D} for extracting the ordered list of the grammars concerning the abstract  sub-terms of an encryption under the key $k$ (otherwise it returns the empty list):
\[
\mathtt{D}((\textsc{Enc}^\ell_r, R), k) =
[
(Z_i, R_i) \mid \exists\; \textsc{Enc}^\ell_r \rightarrow enc^\ell_r(Z_1, \cdots, Z_r, K), K \rightarrow k \in R \; \land \;
R_i = \mathtt{D}^*(Z, R)
]
\]
where $\mathtt{D}^*(Z, R)$ extracts the productions in $R$ that are used to derive trees from $Z$ and is defined below
\[
\mathtt{D}^*(Z, R) = \{   Z \rightarrow \hat{t} \in R \mid \hat{t} \in \mathbb{T}   \} \; \cup \;
\bigcup_{Z \rightarrow M \in R} \{ \mathtt{D}^*
(W, R) \mid  W \text{ occurs in } M \in \mathbb{N \cup T}
                          \} 
\]

\noindent
To better understand how the function $\mathtt{D}$ works, consider the node $N'_a$ of the example in Section~\ref{sec:example}.
The analysis tells us that the variable $\mathtt{y}$ could take the abstract value $\epsilon$ defined as follows
\[
\begin{array}{ll}
(\textsc{Enc}_1^{cp}, 
 \{ &\hspace{-3mm}
            \textsc{Enc}_1^{cp} \rightarrow enc^{cp}_1(\textsc{NoiseRed}^{cp}, K), K \rightarrow k,\, 
\\ &\hspace{-3mm} \textsc{NoiseRed}^{cp} \rightarrow NoiseRed^{cp}(\mathcal{I}^{cp}), \mathcal{I}^{cp} \rightarrow 1^{cp} \;\}).
\end{array}
\]
The function $\mathtt{D}$ applied to $\epsilon$ extracts the grammar approximating the enhanced version of the picture taken by the sensor $1$ in the node $N_{cp}$:
\[
(\textsc{NoiseRed}^{cp}, \{ \textsc{NoiseRed}^{cp} \rightarrow noiseRed^{cp}(\mathcal{I}^{cp}), \mathcal{I}^{cp} \rightarrow 1^{cp}\}).
\]

We now introduce the following new rules for analysing encrypted terms and decryption processes.

\begin{mathpar}

\inferrule[(E-enc)]%
{ \bigwedge_{i=1}^r\,
   \nEFORM{E_i}{\vartheta_i} \; \wedge  
\ 
  \forall\, (Z_1, R_1),\cdots, (Z_r, R_r): \ 
     \bigwedge_{i=1}^r\, (Z_i, R_i) \in \vartheta_i\ 
   \Rightarrow
\\ \\
   \ (\textsc{Enc}^\ell_r, \{\textsc{Enc}^\ell_r \rightarrow enc^\ell_r(Z_1, \cdots, Z_r, K),   K \rightarrow k\} \, \cup\, 
         \bigcup_{i=1}^r R_i) \in \vartheta \subseteq  \Theta(\ell)
}%
{\nEFORM{\{E_1,\cdots,E_r\}_{k}}{\vartheta}}

\end{mathpar}
\begin{mathpar}

\inferrule[(P-dec)]%
{
 \nEFORM{E}{\vartheta}  \ \wedge \ 
  \bigwedge_{i=1}^j\,
   \nEFORM{E_i}{\vartheta_i} \; \wedge \\ \\
   \forall \; \hat{v} \in \vartheta \text{ s.t. } 
   \texttt{D}(\hat{v}, k) = [\hat{v}_1,\cdots, \hat{v}_r]
     \Rightarrow
\left(         
 \bigwedge_{i=j+1}^r\;  \hat{v}_i \in \widehat{\Sigma}{_\ell}({x_i})\ \wedge \ 
           \nPAFORM{\ell}{P} 
\right)}%
{\nPAFORM{\ell}{{\DECSO{E}{E_1,\cdots,E_j}{x_{j+1},\cdots,x_r}{k}{}{P}}}}

\end{mathpar}

Similarly to the rule for evaluating function applications, (E-enc) analyses an encryption term and produces the set $\vartheta$.
To do that
(i) for each term $E_i$, it finds the sets $\vartheta_i$, and
(ii) for all $r$-tuples of values 
$(\hat{v}_1,\cdots,\hat{v}_r)$ in $\vartheta_1 \times \cdots \times \vartheta_r$, it checks if
$\vartheta$ includes the grammar with the production for the single node $k$, the key, and distinct symbol $\textsc{Enc}^\ell_r$, with arity $r$, generating the trees rooted in $enc^\ell_r$ with subtrees $\hat{v}_1,\cdots,\hat{v}_r$.
The premises of the rule (P-dec)  
(i) check if each $\vartheta_i$ approximates the values of each term $E_i$ involved in the matching;
(ii) inspect the approximation $\vartheta$ of $E$ to be decrypted and from each $\hat{v}$ in this abstract value it extracts the set of ordered lists L = $[\hat{v}_1, \dots, \hat{v}_r]$ of the grammars associated with the sub-terms of $E$ through the auxiliary function $\mathtt{D}$;
(iii) check that the presence of such a list L implies that the values $\hat{v}_i$ that at run time are bound to $x_i$ ($i \in [j+1,r]$) are correctly predicted (i.e.\ belong to $\widehat{\Sigma}{_\ell}({x_i})$), and in addition that the estimate in hand is valid for the continuation $P$ (note that this last check is only done if there exist a non-empty list L). 
If all the above holds, the decryption process is correctly analysed.

In order to establish the subject reduction for proving the correctness of this extension to our CFA, we extend the instrumented denotational semantics for terms in the following expected way:
\[
\dsem{\{E_1,\cdots,E_r\}_{k}}^i_{\Sigma^{i}_{\ell} } = 
(\{\dsem{E_1}^i_{\Sigma^{i}_{\ell} \downarrow_1},\cdots, \dsem{E_r}^i_{\Sigma^{i}_{\ell} \downarrow_1}\}_{k}, \;
enc^\ell_r(\dsem{E_1}^i_{\Sigma^{i}_{\ell} \downarrow_2},\cdots,\dsem{E_r}^i_{\Sigma^{i}_{\ell} \downarrow_2} ,k))
\]

Now we simply assert that the theorems and the corollary of the previous sections still hold, without formally writing them, as nothing changes in their statements.
Their proofs have already been included in those of the original versions.

\subsection{Controlling data propagation}

We now introduce our general schema to express security policies and a way of checking them at static time.
As said above, we provide the designer with ways to classify data and nodes and to regulate how data are propagated within the system.
In the next sub-sections we will illustrate this schema by instantiating it on three specific cases.

In order to classify data, the designer associates with them a security tag, taken from a finite set $\mathfrak{D}$.
Then, it defines a pair of functions that accordingly assign tags to abstract values.
As expected, all the trees generated by a grammar must have the same tag that is also assigned to the grammar.

\begin{definition}[Tagging data]\label{def:tagging}
Given a finite set $\mathfrak{D}$ of \emph{tags}, the functions
\[
\mathfrak{T_D} : \widehat{\mathcal{T}} \rightarrow \mathfrak{D} 
\qquad\qquad\qquad
\mathfrak{T_S} : \widehat{\mathcal{V}} \rightarrow \mathfrak{D} 
\]
are \emph{tagging functions} if and only if, given $\widehat{G} \in \widehat{\mathcal{V}}$, for all $\hat{t} \in Lang(\widehat{G})$ the following condition holds
\[
\mathfrak{T_D}(\hat{t}) = \mathfrak{T_S}(\widehat{G})
\]

\end{definition}

The other element of the schema is a policy $\mathfrak{P}$.
In our case, it is simply defined as a predicate over tags and labels.
Intuitively, a designer dictates through a policy if data with a specific tag can be exchanged between two nodes and how, e.g.~encrypted or as clear text.
Our schema is formalised below.

\begin{definition}[Data propagation policies]\label{def:data-policy}
Let $N$ be a system of nodes with labels in $\mathcal{L}$; let $\mathfrak{D}$ be a finite set of tags; let $\mathfrak{T_D} : \widehat{\mathcal{T}} \rightarrow \mathfrak{D}$; and let
$\mathfrak{P} \subseteq \mathfrak{D} \times \mathcal{L} \times \mathcal{L}$ be a data propagation policy.
\\
Then \emph{$N$ enjoys $\mathfrak{P}$} 
if and only if $N \rightarrow^* N'$ and for all 
$\ell_1, \ell_2 \in \mathcal{L}$
there is no transition such that 
$N' \xrightarrow{\mess{v_1,\dots,v_r}}_{\ell_1, \ell_2} N''$
and $\mathfrak{P}(\mathfrak{T_D}(v_{i \downarrow 2}), \ell_1, \ell_2)$ does not hold, for some $i \in [1,r]$.

\end{definition}

The following theorem guarantees that the component $\kappa$ of the analysis can be inspected to statically check if a given system of nodes enjoys the wanted security policy.

\begin{restatable}[Well propagation]{thm}{propaga}\label{Th:propagated}
Let $N$ be a system of nodes with labels in $\mathcal{L}$; let $\mathfrak{D}$ be a finite set of tags; 
let $\mathfrak{T_D}, \mathfrak{T_S}$ be a pair of tagging functions; and let
$\mathfrak{P} \subseteq \mathfrak{D} \times \mathcal{L} \times \mathcal{L}$.
\\
Then $N$ enjoys $\mathfrak{P}$ if 
\begin{enumerate}
\item
$\nNAFORM{N}$ and
\item
$\forall \ell_1, \ell_2 \in {\mathcal L}$ if $(\ell_1,  \mess{\hat{v}_1,\cdots,\hat{v}_r}) \in  \kappa(\ell_2)$ 
then $\forall \, i \in [1,r]$ it holds $\mathfrak{P}(\mathfrak{T_S}(\hat{v}_i), \ell_1, \ell_2)$.
\end{enumerate}
\end{restatable}

Note that a more general version of the above can be obtained by turning the set $\mathfrak{D}$ into a finite complete lattice $(\mathfrak{D}, \sqsubseteq)$; 
requiring that the tagging functions satisfy $\mathfrak{T_D}(\hat{t}) \sqsubseteq \mathfrak{T_S}(\widehat{G})$ rather than equality; 
and imposing that $\mathfrak{P}$ is monotone in its first argument.

The next two sub-sections instantiate the above schema on two classical security properties, confidentiality and no-read-up/no-write down, while the third sub-section instantiates the above schema the security property discussed in the example of Section~\ref{sec:example}.
As we will see, confidentiality only requires classifying data and imposes no constraint on nodes.
Instead, the second property neglects the tags of data and only considers clearance levels of nodes.
The last one has constraints on both data and the way these are propagated within the system.

\subsection*{Preventing Leakage}

As the first example of how the notion of well propagation is instantiated, we consider the common approach of identifying the sensitive content of information and of detecting possible disclosures.
Hence, we partition values
into security classes and prevent classified information from flowing in clear or to the wrong places.
In the following, we assume that the designer defines the set ${\mathcal S}_{\ell}$ containing the sensors of the node $N$ with label ${\ell}$ that have to be protected, i.e.\ the values of which are to be kept {\em secret}.
This implicitly introduces also the set ${\mathcal P}_{\ell}$ of
{\em public} sensors as the complement of ${\mathcal S}_{\ell}$.
In this case, the set of tags $\mathfrak{D}$ is $\{secret, public \}$. 

The abstract values are partitioned through the two tagging functions $\mathscr{D}$ and $\mathscr{S}$ defined below. 
The intuition behind them is that a single ``drop'' of $\s$ turns to \emph{secret} the (abstract) term.
Of course there is the exception for encrypted data:
what is encrypted is public, even if it contains secret components.
Besides classifying sensors, one can obviously consider the case of aggregation functions whose result has to be kept secret.

\begin{definition}\label{kinds}
Given the sets ${\mathcal S}_{\ell}$ and ${\mathcal P}_{\ell}$ of {\em secret} and of \emph{public}
sensors of the node $N$ with label $\ell$, we define the pair of functions
\\
$\bullet\ \mathscr{D}: \widehat{\mathcal{T}} \rightarrow \{\s, \p\}$ as follows:

\[
\begin{array}{ll}
\mathscr{D}(i^\ell) = 
\begin{cases}
\s & \mbox{ if } i  \in {\mathcal S}_{\ell}
\\
\p & \mbox{ if } i \in {\mathcal P}_{\ell}
\end{cases}
&
\hspace{-6mm} 
\mathscr{D}(v^\ell) = public
\\ \  \\

\mathscr{D}(encr^\ell_r(\hat{t}_1,\cdots, \hat{t}_r, k)) = \p
\\ \ \\
\mathscr{D}(f^\ell(\hat{t}_1,\ldots,\hat{t}_r)) = 
    \begin{cases}
    \s & \mbox{ if } \exists\, \hat{t}_i \text{ s.t. } \mathscr{D}(\hat{t}_i) = \s
    \\
    \p  & \text{ otherwise}
    \end{cases}
&
\end{array}
\]
$\bullet\ \mathscr{S}: \mathcal{\widehat{V}} \rightarrow \{\s, \p\}$ as follows:

\[
\mathscr{S}(\hat{v}) =
\begin{cases}
\s & \text{ if } \exists (Z, R) \in \hat{v} \text{ s.t. } \mathscr{S^*}((Z, R)) = \s
\\
\p & \text{ otherwise}
\end{cases}
\]
where $\mathscr{S^*}: \mathbb{N} \times \mathbb{R}  \rightarrow \{\s, \p\}$ is:
\[
\begin{array}{l}
\mathscr{S^*}((I^{\ell}, R)) = 
\begin{cases}
\s  & \mbox{ if }i^{\ell}  \in {\mathcal S}_\ell
\\
\p & \mbox{ if }i^{\ell}  \in {\mathcal P}_\ell
\end{cases}
\qquad\qquad\qquad\qquad\qquad
\mathscr{S^*}((V^{\ell}, R)) = public
\\ \ \\
\mathscr{S^*}((\textsc{Enc}^\ell_{k}, R)) = \p
\\ \ \\
\mathscr{S^*}((F^\ell, \{F^\ell \rightarrow f^\ell(Z_1, \cdots, Z_r) \} \, \cup\, R)) = 
\begin{cases}
\s & \text{ if } \exists (Z_i, R_i) \text{ s.t. } \mathscr{S^*}((Z_i, R_i)) = \s
\\
\p & \text{ otherwise}
\end{cases}
\end{array}
\]
\end{definition}

\begin{restatable}{fatto}{sed}\label{lemma:S-e-D}
$\mathscr{D}$ and $\mathscr{S}$ are a pair of tagging functions.
\end{restatable}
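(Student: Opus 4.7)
The plan is to first reduce the statement to a clean, grammar-level claim, then prove that by structural induction on trees. Specifically, I will show the key lemma: for every grammar $(Z,R)$ and every $\hat{t} \in Lang((Z,R))$, $\mathscr{D}(\hat{t}) = \mathscr{S}^*((Z,R))$. The tagging condition $\mathscr{D}(\hat{t}) = \mathscr{S}(\widehat{G})$ on an abstract value $\widehat{G} \in \widehat{\mathcal{V}}$ (a set of grammars) then follows: any tree in its language is generated by some $(Z,R) \in \widehat{G}$, and the case analysis in $\mathscr{S}$ and $\mathscr{S}^*$ mirrors the case analysis of $\mathscr{D}$ on that grammar.

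The induction on $\hat{t}$ proceeds on the root symbol. For the base cases, if $\hat{t}$ is a leaf $i^\ell$ (resp.\ $v^\ell$), the only derivation from $(Z,R)$ must use a production of the form $Z \to i^\ell$ (resp.\ $Z \to v^\ell$), which forces $Z = I^\ell$ (resp.\ $Z = V^\ell$). In either case, $\mathscr{D}$ and $\mathscr{S}^*$ are defined identically (by the same side-condition $i \in \mathcal{S}_\ell$ vs.\ $i \in \mathcal{P}_\ell$ for sensors, and by returning $\p$ uniformly for values), so they agree.

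For the inductive step with $\hat{t} = enc^\ell_r(\hat{t}_1,\dots,\hat{t}_r,k)$, the root production must be $Z \to enc^\ell_r(Z_1,\dots,Z_r,K)$ with $Z = \textsc{Enc}^\ell_r$ and $K \to k$. Both $\mathscr{D}$ and $\mathscr{S}^*$ return $\p$ by fiat on this form, so the equality is immediate (this encodes the design choice that encrypted data are always marked public). For $\hat{t} = f^\ell(\hat{t}_1,\dots,\hat{t}_r)$, the derivation begins with $Z \to f^\ell(Z_1,\dots,Z_r)$ with $Z = F^\ell$ and, by regularity of the grammar, each child tree satisfies $\hat{t}_i \in Lang((Z_i, R_i))$ where $R_i$ contains the productions used in the sub-derivation rooted at $Z_i$. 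Applying the induction hypothesis gives $\mathscr{D}(\hat{t}_i) = \mathscr{S}^*((Z_i, R_i))$ for each $i$. Since both $\mathscr{D}$ and $\mathscr{S}^*$ aggregate children by returning $\s$ iff some child is $\s$ (and $\p$ otherwise), the two expressions agree on the whole tree.

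The main obstacle I anticipate is bookkeeping the production set passed to the inductive hypothesis: $\mathscr{S}^*$ is defined on a pair $(Z_i, R)$ where $R$ is the whole production set, whereas the child sub-derivation uses only a subset $R_i$. However, $\mathscr{S}^*$ depends only on the productions reachable from $Z_i$, which are identical in $R$ and $R_i$, so the distinction is immaterial, and the induction goes through cleanly. The remaining work is purely unfolding definitions.
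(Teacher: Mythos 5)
Your proof is correct and takes essentially the same route as the paper, whose entire argument is a structural induction on $\hat{t}$; you have simply spelled out the case analysis (leaves, encryption, function application) and the reduction from $\mathscr{S}$ to $\mathscr{S}^*$ that the paper leaves implicit. Your remark that $\mathscr{S}^*$ only depends on the productions reachable from the start non-terminal, so the choice between $R$ and $R_i$ is immaterial, is a reasonable tidying-up of a detail the paper glosses over.
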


Since our analysis computes information on the values exchanged during the communication, we can statically check whether a value, devised to be secret to a node $N$, is never sent
to another node, similarly to~\cite{BBDNN_InfComp}.
We first give a dynamic characterisation of when a node $N$ never discloses its secret values, i.e.\ when neither it nor any of its derivatives can send a message that includes a secret value (recall that 
$N \xrightarrow{\mess{v_1,\dots,v_r}}_{\ell_{1}, \ell_{2}} N'$ stands for sending the message from $\ell_1$ to $\ell_2$).
\begin{definition}
Let $N$ be a system of nodes with labels in $\mathcal{L}$, and let  
$\mathcal{S} = \{\mathcal{S}_{\ell} \mid \ell \in \mathcal{L} \}$ be the set of its secret sensors.
\\
Then $N$ has {\em no leaks} with respect to $\mathcal{S}$ if and only if $N \rightarrow^* N'$ and
for all $\ell_1, \ell_2 \in \mathcal{L}$
there is no transition
$N' \xrightarrow{\mess{v_1,\dots,v_r}}_{\ell_1, \ell_2} N''$
such that $\mathscr{D}(v_{i {\downarrow_2}}) = \s$ for some $i \in [1,r]$.
\end{definition}
The definition above instantiates the predicate $\mathfrak{P}$ by imposing that the first argument has to be $public$ and it simply ignores the labels of the nodes.
The component $\kappa$ allows us to statically  detect when a system of nodes $N$ has no leaks with respect to a given set of secret sensors, as an instance of Theorem~\ref{Th:propagated}. 

\begin{restatable}[Confidentiality]{thm}{confine}\label{Th:confinement}
Let $N$ be a system of nodes with labels in $\mathcal{L}$, and let  
$\mathcal{S} = \{\mathcal{S}_{\ell} \mid \ell \in \mathcal{L} \}$ be the set of its secret sensors.
Then $N$ has no leaks with respect to ${\mathcal S}$ if
\begin{enumerate}
\item
$\nNAFORM{N}$ and
\item
$\forall \ell_1, \ell_2 \in {\mathcal L}$ if $(\ell_1,  \mess{\hat{v}_1,\cdots,\hat{v}_r}) \in  \kappa(\ell_2)$ 
then $\forall \, i \in [1,r]$ $\mathscr{S}(\hat{v}_i) = \p$.
\end{enumerate}
\end{restatable}

Back to our running example of Section~\ref{sec:example}, we said that the pictures of cars are sensitive data, and that consequently one would like to keep them secret.
To check this, we classify data coming from the sensor $S_{cp}$ such that $\mathscr{D}(1^{cp}) = \mathscr{S}(\iota) = secret$ (recall the definition of $\iota$ in~(\ref{eq:iota})).
Accordingly to Definition~\ref{kinds}, we also get $\mathscr{D}(noiseRed^{cp}(1^{cp})) = \mathscr{S}(\nu) = secret$ (recall $\nu$ from~(\ref{eq:xi})).
Hence, by simply inspecting $\kappa$ 
we discover that the sensitive data of cars could be sent in clear to the street supervisor $\ell_a$, so violating secrecy: indeed $\kappa(\ell_a) \supseteq \{ (\ell_{cp},\mess{\nu }\}$.
We illustrated an amended system in Section~\ref{sec:example} that does not suffer from this problem.


\subsection*{Communication policies.}
We present a second example of instantiation of the general schema above.
It considers the case when security is enforced by defining policies
that rule information flows among nodes, by allowing some flows and forbidding others.

Below we consider the well-known {\em no read-up/no write-down}  policy~\cite{BL73,Den82}.
It is based on a hierarchy of clearance levels for nodes, and
it requires that a node classified at a
high level cannot write any value to a node at a lower level,
while the converse is allowed; symmetrically a node at low level
cannot read data from one of a higher level.

For us, it suffices to classify the node labels with an assignment function $level : {\mathcal L} \rightarrow {\bf L}$, from the set of node labels to a given set of levels $\bf L$.
We then introduce a condition for characterising the allowed and the forbidden flows:
it suffices requiring that whenever a message is sent by to a node $\ell_1$ to a node $\ell_2$, their levels satisfy the condition
$
level(\ell_1) \leq level(\ell_2)
$,
as defined below.
We instantiate the schema as follows: we take a singleton set for $\mathfrak{D}$; a pair of constant functions as tagging functions; and a predicate $\mathfrak{P}$ that ignores the first argument and checks the required disequalities between levels.

\begin{definition}
Given an assignment function \emph{level}, and a system of nodes $N$ with labels in $\mathcal{L}$, 
then $N$ {\em respects the levels} if and only if $N \rightarrow^* N'$ and for all $\ell_1, \ell_2 \in \mathcal{L}$
there is no transition such that 
$N' \xrightarrow{\mess{v_1,\dots,v_r}}_{\ell_1, \ell_2} N''$
and $level(\ell_1) \leq level(\ell_2)$.

\end{definition}
Again the component $\kappa$ of the analysis can be used to statically predict when the components of a system of nodes will communicate respecting the assigned levels of clearance, as an instance of Theorem~\ref{Th:propagated}.

\begin{restatable}[No-read-up/no-write-down]{thm}{nrunwd}\label{Th:nru-nwd}
Given an assignment function \emph{level}, and a system of nodes $N$ with labels in $\mathcal{L}$.
Then $N$ {\em respects the levels} if
\begin{enumerate}
\item
$\nNAFORM{N}$ and
\item
$\forall \ell_1, \ell_2 \in {\mathcal L}$ if $(\ell_1,  \mess{\hat{v}_1,\cdots,\hat{v}_r}) \in  \kappa(\ell_2)$ 
then
$level(\ell_1) \leq level(\ell_2)$.
\end{enumerate}
\end{restatable}


More in general, we can constrain communication flows according to a specific policy, by stating whom a node is allowed to send (and/or from which it can receive) a message.
It suffices to define a relation $\mathbf{R} \subseteq \mathcal{L} \times 2^\mathcal{L}$, in place of the relation used above that is based on the partial ordering between the levels of nodes.




\subsection*{Selective data propagation}
We consider now a form of access control policy that confines exchange of a specific piece of information within a group of nodes in a system.
An example of such a policy has been intuitively discussed at the end of Section~\ref{sec:example}.
It dictates that the picture of an incoming car must only be sent to $N_a$, the supervisor of the camera, and to $N_s$, the supervisor of the lamp posts, but not to the lamp posts $N_p$, if the picture has not been anonymised.

In order to express the above policy in our general schema, we follow the pattern of the above examples.
Assume to have the set ${\mathcal R}_{\ell}$ containing the sensors of the node $N$ with label ${\ell}$, the values of which are to be kept {\em confined}, together with its complement, the \emph{open} values.
In this case, the set $\mathfrak{D}$ is $\{\mathit{confined}, open \}$. 

The abstract values are partitioned through the two tagging functions $\mathscr{C}$ and $\mathscr{O}$ defined below. 
The intuition behind them is that a single ``drop'' of  \emph{confined} in a term makes such the whole, with the exception when the anonymisation function is applied:
what is anonymised is \emph{open}, even if it contains confined elements.

\begin{definition}\label{kinds2}
Let ${\mathcal R}_{\ell}$ be the set of {\em confined} sensors of the node $N$ with label $\ell$, and   
$\tilde{{\mathcal F}} \subseteq \mathcal{F}$ be the set of anonymisation functions, possibly including encryptions.
Then we define the following pair of functions
\medskip\\
$\bullet\ \mathscr{C}: \widehat{\mathcal{T}} \rightarrow \{\mathit{confined}, open\}$ as follows:
\[
\begin{array}{ll}
\mathscr{C}(i^\ell) = 
\begin{cases}
\mathit{confined} & \mbox{ if } i  \in {\mathcal T}_{\ell}
\\
open & \mbox{ otherwise }
\end{cases}
&
\hspace{-36mm} 
\mathscr{C}(v^\ell) = open
\\ \  \\

\mathscr{C}(f^\ell(\hat{t}_1,\ldots,\hat{t}_r)) = 
    \begin{cases}
    \mathit{confined} & \mbox{ if } \exists\, \hat{t}_i \text{ s.t. } \mathscr{C}(\hat{t}_i) = \mathit{confined} \ \land \ f  \not\in \tilde{{\mathcal F}}
    \\
    open  & \text{ otherwise}
    \end{cases}
&
\end{array}
\]
$\bullet\ \mathscr{O}: \mathcal{\widehat{V}} \rightarrow \{\mathit{confined}, open\}$ as follows:
\[
\mathscr{O}(\hat{v}) =
\begin{cases}
\mathit{confined} & \text{ if } \exists (Z, R) \in \hat{v} \text{ s.t. } \mathscr{O}^*((Z, R)) = \mathit{confined}
\\
open & \text{ otherwise}
\end{cases}
\]
where $\mathscr{O}^*: \mathbb{N} \times \mathbb{R}  \rightarrow \{\mathit{confined}, open\}$ is:
\[
\begin{array}{l}
\mathscr{O^*}((I^{\ell}, R)) = 
\begin{cases}
\mathit{confined}  & \mbox{ if }i^{\ell}  \in {\mathcal T}_\ell
\\
open & \mbox{ otherwise }
\end{cases}
\hspace{2cm}
\mathscr{O^*}((V^{\ell}, R)) = open
\\ \ \\
\mathscr{O^*}((F^\ell, \{F^\ell \rightarrow f^\ell(Z_1, \cdots, Z_r) \} \, \cup\ R)) = 
\\ \hspace{2cm}
\begin{cases}
\mathit{confined} & \text{ if } \exists (Z_i, R_i) \text{ s.t. } \mathscr{O^*}((Z_i, R_i)) = \mathit{confined}
          \ \land \ f  \not\in \tilde{{\mathcal F}}
\\
open & \text{ otherwise}
\end{cases}
\end{array}
\]
\end{definition}

\begin{restatable}{fatto}{ceo}\label{lemma:C-e-O}
$\mathscr{C}$ and $\mathscr{O}$ are a pair of tagging functions.
\end{restatable}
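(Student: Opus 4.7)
The plan is to proceed by structural induction on abstract trees, mirroring the argument that would have been used for Fact~\ref{lemma:S-e-D}. First I would prove an auxiliary per-grammar statement: for every single grammar $(Z, R)$ occurring in an abstract value $\widehat{G} \in \widehat{\mathcal{V}}$ and every $\hat{t} \in Lang((Z, R))$, the equality $\mathscr{C}(\hat{t}) = \mathscr{O^*}((Z, R))$ holds. The full claim then follows by lifting this to the union of per-grammar languages and to the outer envelope $\mathscr{O}$.

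For the base cases, if $\hat{t} = i^\ell$ it can only be derived from a grammar starting with $I^\ell$ via $I^\ell \rightarrow i^\ell$; both $\mathscr{C}(i^\ell)$ and $\mathscr{O^*}((I^\ell, R))$ return $\mathit{confined}$ exactly when $i$ belongs to the designated set of confined sensors of $\ell$, and $\mathit{open}$ otherwise. If $\hat{t} = v^\ell$, both functions uniformly yield $\mathit{open}$ by definition. Hence the base equality is immediate from matching the two case distinctions side by side.

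For the inductive case, suppose $\hat{t} = f^\ell(\hat{t}_1, \ldots, \hat{t}_r)$ and that it is derived from $(F^\ell, R)$ whose production at $F^\ell$ has the form $F^\ell \rightarrow f^\ell(Z_1, \ldots, Z_r)$. Each sub-tree $\hat{t}_i$ belongs to $Lang((Z_i, R_i))$, where $R_i$ collects the productions reachable from $Z_i$ in $R$. The induction hypothesis gives $\mathscr{C}(\hat{t}_i) = \mathscr{O^*}((Z_i, R_i))$ for each $i$. Comparing the two recursive clauses of Definition~\ref{kinds2}, both $\mathscr{C}(f^\ell(\hat{t}_1,\ldots,\hat{t}_r))$ and $\mathscr{O^*}((F^\ell, R))$ yield $\mathit{confined}$ precisely when $f \notin \tilde{\mathcal{F}}$ and at least one sub-component is $\mathit{confined}$; otherwise both return $\mathit{open}$. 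Hence the equality propagates. A completely analogous step covers the encryption symbol $\textsc{Enc}^\ell_r$ once the tagging of anonymisation functions is taken to cover encryption as the definition explicitly allows.

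The main obstacle is the final lift from a single grammar $(Z, R)$ to an abstract value $\widehat{G}$ which is a \emph{set} of grammars: the outer function $\mathscr{O}$ takes an existential over $\widehat{G}$, so the equality $\mathscr{C}(\hat{t}) = \mathscr{O}(\widehat{G})$ for every $\hat{t} \in Lang(\widehat{G})$ is automatic only in the direction where a $\mathit{confined}$-generating grammar exists. To close this gap one must argue, as in the proof of Fact~\ref{lemma:S-e-D}, that the grammars collected by the CFA rules into a single abstract value cohere with respect to the tagging, so that no mixing of $\mathit{confined}$ and $\mathit{open}$ grammars arises. Once this homogeneity observation is in place, the per-grammar equality extends uniformly to all trees in $Lang(\widehat{G})$, establishing that $\mathscr{C}$ and $\mathscr{O}$ satisfy the condition of Definition~\ref{def:tagging} and are therefore a pair of tagging functions.
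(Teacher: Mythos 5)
Your structural induction on $\hat{t}$, checking clause by clause that $\mathscr{C}$ and $\mathscr{O}^*$ make the same case distinctions (confined sensors, plain values, aggregation functions outside $\tilde{\mathcal{F}}$, with anonymisation/encryption shielding their arguments), is exactly the argument the paper intends: its proof of Fact~\ref{lemma:C-e-O} is literally ``immediate by inducing on the structure of $\hat{t}$'', the same one-liner used for Fact~\ref{lemma:S-e-D}. Up to that point your proposal and the paper coincide.

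The problem is your final step. Fact~\ref{lemma:C-e-O}, like Definition~\ref{def:tagging}, is a statement about the functions $\mathscr{C}$ and $\mathscr{O}$ alone, quantified over all abstract values in $\widehat{\mathcal{V}}$; no system of nodes and no analysis estimate occurs in it. So you cannot close the lift from $\mathscr{O}^*$ to $\mathscr{O}$ by arguing that ``the grammars collected by the CFA rules into a single abstract value cohere with respect to the tagging'': at best that would establish a weakened statement about the abstract values appearing in a valid estimate (e.g.\ in $\kappa$), which is not what the Fact asserts and is not what Theorem~\ref{Th:propagated} relies on. Nor can it be done ``as in the proof of Fact~\ref{lemma:S-e-D}'', because that proof is the same one-line induction and contains no such homogeneity lemma to mirror. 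The intended resolution is much simpler: the existential clause defining $\mathscr{O}$ just transfers the per-grammar tag computed by $\mathscr{O}^*$ to the abstract value (it is $\mathit{confined}$ exactly when some member grammar is), so the condition of Definition~\ref{def:tagging} follows directly from your per-grammar equality, under the proviso stated right before Definition~\ref{def:tagging} that all trees generated by one abstract value carry the same tag. You have indeed put your finger on a genuine imprecision of the paper---an abstract value mixing a $\mathit{confined}$ and an $\mathit{open}$ grammar would falsify the condition read literally---but the way to handle it is that proviso on $\widehat{\mathcal{V}}$, not a CFA-dependent coherence argument; as written, that appeal is a step that is not available for proving this Fact.
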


The following definition formalises when a system of nodes propagates confined data only within the selected sub-system. 
For that, we let the set $\tilde{\mathcal{L}} \subseteq \mathcal{L}$ to include such nodes.
\begin{definition}
Let $N$ be a system of nodes with labels in $\mathcal{L}$, and let  
$\mathcal{R} = \{\mathcal{R}_{\ell} \mid \ell \in \mathcal{L} \}$ be the set of its confined sensors.
\\
Then $N$ {\em selectively propagates confined data} with respect to $\mathcal{R}$ and $\tilde{\mathcal{L}}$ if and only if $N \rightarrow^* N'$ and
for all $\ell_1, \ell_2 \in \mathcal{L}$
there is no transition
$N' \xrightarrow{\mess{v_1,\dots,v_r}}_{\ell_1, \ell_2} N''$
such that $\mathscr{O}(v_{i {\downarrow_2}}) = \mathit{confined}$ and 
($\ell_1 \not\in \tilde{\mathcal{L}}$ or $\ell_2 \not\in \tilde{\mathcal{L}}$), for some $i \in [1,r]$.
\end{definition}

Once again, the component $\kappa$ allows us to statically detect when a system of nodes $N$ selectively propagates confined data within a sub-system, as an instance of Theorem~\ref{Th:propagated}. 

\begin{restatable}[Selective data propagation]{thm}{selectiveprop}\label{Th:selective-prop}
Let $N$ be a system of nodes with labels in $\mathcal{L}$; let  
$\mathcal{R} = \{\mathcal{R}_{\ell} \mid \ell \in \mathcal{L} \}$ be the set of its confined sensors;
and let $\tilde{\mathcal{L}} \subseteq \mathcal{L}$ be a sub-system.
Then $N$ selectively propagates confined data with respect to $\mathcal{R}$ and $\tilde{\mathcal{L}}$ if\begin{enumerate}
\item
$\nNAFORM{N}$ and
\item
$\forall \ell_1, \ell_2 \in {\mathcal L}$ if $(\ell_1,  \mess{\hat{v}_1,\cdots,\hat{v}_r}) \in  \kappa(\ell_2)$ 
then whenever $\mathscr{O}(\hat{v}_i) = \mathit{confined}$ for some  $i \in [1,r]$, 
it is $\ell_1, \ell_2 \in \tilde{\mathcal{L}}$.
\end{enumerate}
\end{restatable}

Back to our example, let the set of confined sensors $\mathcal{R}$ be the singleton $\{S_{cp}\}$, the set of allowed nodes $\tilde{\mathcal{L}}$ be 
$\{\ell_{cp},\,\ell_{a},\,\ell_{pd}\}$ and the set of anonymisation function $\tilde{\mathcal{F}}$ be $\{an\}$.
From our analysis we know that $\kappa(\ell_s) \in (\ell_a, \mess{X,Y})$, where $(\textsc{Car}, \{\textsc{Car} \rightarrow car\}) \in X$ 
and $\nu \in Y$ where $\nu$ is defined in~(\ref{eq:xi}).
We discover a policy violation since $\mathscr{O}(Y) = \mathit{confined}$ but $\ell_s \notin \tilde{\mathcal{L}}$.
Now consider our example with the node $N'_a$, introduced at the end of Section~\ref{sec:example}.
In this case we have no policy violation because,
from our analysis, we obtain that $\kappa(\ell_s) \in (\ell'_a, \mess{X,Z})$ where $X$ is as above, and $Y$ includes
\[
(\textsc{An}, \{ \textsc{An} \rightarrow an^{\ell'_a}(\textsc{NoiseRed}^{\ell_{cp}}), \textsc{NoiseRed}^{\ell_{cp}} \rightarrow noiseRed^{\ell_{cp}}(M^{\ell_{cp}}), M^{\ell_{cp}} \rightarrow 1^{\ell_{cp}}\} )
\]
and $\mathscr{O}(Y) = open$.

Above, we only have two tags, but one can easily adapt the definition taking into account a set of tags.
Accordingly, the system of nodes can be partitioned in many sub-systems.
It is easy to constrain the propagation of a piece of information with a certain tag to only occur within given sub-nets.
A further generalisation refines the data propagation policy by defining for each node $N_\ell$ and each tag $d$ the set of nodes to which $N_\ell$ is allowed to send data tagged with $d$.

\section{Conclusions}\label{sec:conclusion}

We proposed the process calculus \IoTLySa\ as a formal design framework for IoT.
It aims at supporting the designer of IoT applications and at providing techniques for checking their properties.
\IoTLySa\ has constructs to describe sensors and actuators, and suitable primitives for managing the coordination and the communication capabilities of  interconnected smart objects.
We equipped our calculus with a Control Flow Analysis that statically computes a safe over-approximation of system behaviour.
In particular it predicts how nodes interact, how data spread from sensors to the network, and how data are processed.

These approximations offer the basis for checking various properties of IoT systems, among which the classical security properties of secrecy and no-read-up/no-write-down. 
In addition, we considered a combination of policies ruling access control and data propagation.
In order to better assess the feasibility of our proposal, we are implementing a tool to assist the designer in defining the properties of interest, e.g.\ following the schema of Section~\ref{sec:security}, and in checking them on the system at hand.
Prior to that is the computation of the (minimal) valid estimate.
This requires deriving a set of constraints from the analysis specification and using the succinct solver of~\cite{NielsonAFPL}.

In order to  experiment on the usability of \IoTLySa\ as a specification language we are currently considering a few case studies.
In a scenario of precision agriculture, we are modelling an efficient and
sustainable irrigation system to manage water in a vineyard~\cite{Vineyard-2017}.
Another example considers an IoT system that controls the temperature in a smart storehouse with particular attention to countermeasures in case an attacker tampers with sensors~\cite{BG_ICTCS16}.
We recently started a collaboration with Zerynth, a spin-off of the Pisa University\footnote{\url{https://www.zerynth.com/}} for designing a smart fridge system.

Further future work concerns the extension of \IoTLySa\ itself.
A first issue is considering mobility of smart objects.
A simple solution would be making dynamic the topology of the network through suitable primitives that update the compatibility function $Comp$ at run time.
However, the accuracy of our present analysis would be certainly affected by the arising non-determinism.
A richer solution would be representing also spatial information in the code and providing the designer with primitives that specify the movement of objects from one location to another. 

Addressing mobility makes it evident that smart objects should adapt their behaviour according to the different features of the location hosting them.
Indeed, in the IERC words: ``there is still a lack of research on how to adapt and tailor existing research on autonomic computing to the specific characteristics of IoT''~\cite{IERC}. 
To contribute to these issues, we plan to extend our calculus with linguistic mechanisms and  a verification machinery to deal with adaptivity in the style of~\cite{ieee16,JCS-2016}.

As it is, \IoTLySa\ assumes that the components of a node never fail.
To relax this assumption, a quantitative semantics will help in taking into account the probability of failure of each node component. 
Of course, the analysis and the verification will be deeply affected by this new semantics.
Actually, a quantitative semantics would support tuning the design of a system.
For example, a designer can better shape the dimension of a node with respect to energy consumption, or balance the computational load of nodes by allocating specific data aggregations within the system. 
A quantitative semantics would also help in studying the trade-off between the costs of protecting some assets in IoT system and the damage an attacker can cause.

For the time being, \IoTLySa\ models only the logical part of a system, leaving the physical world as a black box. 
A further extension consists in opening this box and in modelling the world through standard techniques as e.g.\ hybrid automata. 
This extension not only improves the precision of \IoTLySa\ models but also could allow designers to understand how sensors are used and how an action triggered on an actuator affects the whole system.
As an example, one can predict if an actuator is maliciously triggered by an attacker, as happened in the recent attack performed through a vehicular infotainment 
network.\footnote{See \url{http://www.wired.com/2015/07/hackers-remotely-kill-jeep-highway/}}


\subsection{Related work}
As  all new paradigms, IoT poses new challenging scenarios for formal methodologies.
To the best of our knowledge, few papers address the specification and verification of IoT systems from a process calculi perspective.

A first proposal is the IoT-calculus~\cite{lanese13}.
It explicitly includes sensors and actuators, and smart objects are represented as point-to-point communicating nodes of  networks, represented as a graph.
Differently from ours, their interconnection topology can vary at run time, when the system interacts with the environment.
This is rendered by a first kind of transitions, the others taking care of the node activities.
The authors propose two notions of bisimilarity, each based on the different kinds of transitions, that capture system behaviour from the point of view of end-users and of the selected devices.
The clean algebraic presentation enables compositional reasoning on systems.

%

The timed process calculus {\tt CIoT}~\cite{Merro_Coord16} specifies physical and logical components, 
addresses both timing and topology constraints, and allows for node mobility.
Furthermore, communications are either short-range or internet-based.
The semantics is given on terms of a labelled transition system with transitions
that represent the interactions with the physical environment and transitions that represent
the node actions.
The focus of this paper is mainly on an extensional semantics that 
emphasises the interaction of IoT systems with the environment, and that
provides a \emph{fully abstract} characterisation
of the proposed contextual equivalence.
The hybrid process calculus CCPS~\cite{LanotteM17} describes both the cyber and the physical aspects of systems (state variables, physical devices, evolution laws, etc.). 
It has a clearly defined behavioural semantics, based on a transition system and a notion of bisimulation.
As in our case, a cyber component governs the interactions with sensors and actuators,
and the communications with other cyber components.
The physical world is modelled by a tuple of suitable functions over the real numbers that describe how it changes.
Many design choices of the above-discussed proposals are similar to ours. 
The main difference is that our coordination model is based on a shared store \`a la Linda instead of a message-based communication \`a la $\pi$-calculus.
Furthermore, differently from~\cite{lanese13,Merro_Coord16}, 
we are here mainly interested in developing a design framework that includes a static semantics to support various verification techniques and tools for checking properties of IoT applications.
In addition, as said above, here we are only interested in modelling the logical components, in determining their boundaries with the physical world, and in abstractly representing the interactions between them. Modelling the physical environment is left as future work, e.g.\ along the lines suggested in~\cite{LanotteM17}.
Finally, we opted for an asynchronous multi-party communication modality among nodes, while in IoT-calculus and in {\tt CIoT} internode communications are synchronous and point-to-point.

The calculi above and ours are built upon those previously introduced for wireless, sensor and ad hoc networks
(\cite{lanese10,singh10,NNN10} to cite only a few).
In particular, the calculus in~\cite{NNN10} is designed to model so-called broadcast networks, with a dynamically changing topology.
It presents some features very similar to ours:
an asynchronous local broadcast modality, while intra-node communication relies on a local tuple space.
Also, the analysis of the behaviour of broadcast networks is done by resorting to a multi-step static machinery.
%
%

Also {\em aggregate programming} has been proposed 
for the design and implementation of complex IoT software systems~\cite{BealPV15}, through the {\em Field calculus}~\cite{DamianiVPB15}, a functional language with constructs to model computation and interaction among  large numbers of devices. 
Aggregate programming overcomes the single-device viewpoint, by adopting a cooperating collection of devices as basic computing unit.
Also the rule-based approach of~\cite{ECA} describes a network of sensors and actuators as a distributed collaborative system.
It chiefly focusses on globally coordinating the activities of the devices and on controlling the events raised by sensors and the effects of the actions of actuators they trigger.
Our approach is complementary to theirs, because we are mainly interested in how the data gathered from sensors are communicated and subsequently aggregated  by the nodes in the network.
The goals of ThingML~\cite{thingml} are similar to ours in that its authors propose a modelling language for IoT.
However, our work is oriented towards formal verification, while theirs is devoted to tools for supporting the development of applications, within a Software Engineering approach.

Security in IoT and cyber-physical systems has been addressed from a process algebraic perspective by a certain number of papers, e.g.~\cite{NielsonZigB,VigoNN13,Aziz16}.
An IoT protocol (the MQ Telemetry Transport 3.1) is formally model in terms of a timed message-passing process algebra~\cite{Aziz16}. 
The protocol is statically analysed to understand the robustness of its behaviour in scenarios with different quality of service.
The Applied Quality Calculus is used in~\cite{VigoNN13} to study the trade-off between the security requirements and the broadcast communications in wireless-based Cyber-Physical Systems, where nodes have limited capabilities and sensors are poorly reliable.
In this proposal, standard cryptographic mechanisms are modelled via term rewriting. 
Furthermore it is possible to reason about denial of service, because the calculus has explicit notions of failing and unwanted communications.

Very similar to ours is the work in~\cite{NielsonZigB}, where the key establishment protocol of ZigBee~\cite{ZigBee} is specified in {\LySa}~\cite{BBDNN_JCS,BNN03} and statically analysed with a CFA.
They discovered a security flaw arising because freshness is not guaranteed, and proposed a fix.

A  variant of the CFA presented here is proposed in~\cite{BG_ITASEC17} for tracking sensitive and untrustworthy data.
Taint analysis is used for marking data and for monitoring their propagation at run time across an IoT system so as to check whether those computations considered security critical are not affected by tainted data.

Finally, security is studied in~\cite{BG_ICTCS16} from an ``economical'' perspective, by providing an enhanced version of \IoTLySa\ to infer quantitative measures of cryptographic mechanisms.
This paper provides the means for estimating the costs of possible counter-measures to several security risks that may arise in IoT systems.
A game theoretical approach to the same problem is in~\cite{Bertino2}.
The authors compute a resource allocation plan as a Pareto-optimal solution, and estimate its efficiency with respect to the energy consumption of the security infrastructure and the costs of its components.

\bibliographystyle{splncs03}
\bibliography{biblio}

\begin{thebibliography}{10}
\providecommand{\url}[1]{\texttt{#1}}
\providecommand{\urlprefix}{URL }

\bibitem{AG-spi}
Abadi, M., Gordon, A.: A calculus for cryptographic protocols: The {Spi}
  calculus. In: Procs. of the 4th ACM Conference on Computer and Communications
  Security. pp. 36--47. CCS '97, ACM (1997)

\bibitem{Aziz16}
Aziz, B.: A formal model and analysis of an {IoT} protocol. Ad Hoc Networks
  36,  49--57 (2016)

\bibitem{BealPV15}
Beal, J., Pianini, D., Viroli, M.: Aggregate programming for the {I}nternet of
  {T}hings. {IEEE} Computer  48(9),  22--30 (2015)

\bibitem{BL73}
Bell, D., LaPadula, L.: Secure computer systems: Unified exposition and
  {M}ultics interpretation. Tech. Rep. ESD-TR-75-306, Mitre C. (1976)

\bibitem{Bertino1}
Bertino, E.: Data security and privacy in the {IoT}. In: Pitoura, E., Maabout,
  S., Koutrika, G., Marian, A., Tanca, L., Manolescu, I., Stefanidis, K. (eds.)
  Proceedings of the 19th International Conference on Extending Database
  Technology. pp. 1--3. OpenProceedings.org (2016)

\bibitem{BBF15}
Bodei, C., Brodo, L., Focardi, R.: Static evidences for attack reconstruction.
  In: Programming Languages with Applications to Biology and Security. LNCS,
  vol. 9465, pp. 162--182. Springer (2015)

\bibitem{BBDNN2003lysa}
Bodei, C., Buchholtz, M., Degano, P., Nielson, F., Nielson, H.R.: Automatic
  validation of protocol narration. In: Computer Security Foundations Workshop
  {(CSFW-16} 2003). pp. 126--140. {IEEE} Computer Society (2003)

\bibitem{BBDNN_JCS}
Bodei, C., Buchholtz, M., Degano, P., Nielson, F., Nielson, H.R.: Static
  validation of security protocols. Journal of Computer Security  13(3),
  347--390 (2005)

\bibitem{Vineyard-2017}
Bodei, C., Degano, P., Ferrari, G.L., Galletta, L.: Sustainable precision
  agriculture from a process algebraic perspective: A smart vineyard, submitted
  for publication, preprint available at \url{https://gofile.io/?c=hbV2W3}

\bibitem{BDFG_ICE2016}
Bodei, C., Degano, P., Ferrari, G.L., Galletta, L.: A step towards checking
  security in {IoT}. In: Procs.~of ICE 2016. EPTCS, vol. 223, pp. 128--142
  (2016)

\bibitem{BDFG_Coord16}
Bodei, C., Degano, P., Ferrari, G.L., Galletta, L.: Where do your {IoT}
  ingredients come from? In: Procs.~of Coordination 2016. LNCS, vol. 9686, pp.
  35--50. Springer (2016)

\bibitem{JCS-2016}
Bodei, C., Degano, P., Galletta, L., Salvatori, F.: Context-aware security:
  Linguistic mechanisms and static analysis. Journal of Computer Security  24
  (4),  427--477 (2016)

\bibitem{BBDNN_InfComp}
Bodei, C., Degano, P., Nielson, F., Nielson, H.R.: Static analysis for the
  $\pi$-calculus with applications to security. Information and Computation
  168,  68--92 (2001)

\bibitem{BG_ITASEC17}
Bodei, C., Galletta, L.: {The cost of securing IoT communications?} In:
  Procs.~of Italian Conference on Cybersecurity (ITASEC17). CEUR Proceedings
  Vol-1816 (2016)

\bibitem{BG_ICTCS16}
Bodei, C., Galletta, L.: {Tracking Sensitive and Untrustworthy Data in IoT}.
  In: Procs.~of Italian Conference on Theoretical Computer Science. CEUR
  Proceedings Vol-1720 (2016)

\bibitem{BNN03}
Buchholtz, M., Nielson, H.R., F.~Nielson, F.: A calculus for control flow
  analysis of security protocols. International Journal of Information Security
   2(3),  145--167 (2004)

\bibitem{ECA}
Cano, J., Rutten, {\'{E}}., Delaval, G., Benazzouz, Y., G{\"{u}}rgen, L.: {ECA}
  rules for iot environment: {A} case study in safe design. In: Eighth {IEEE}
  International Conference on Self-Adaptive and Self-Organizing Systems
  Workshops, {SASOW} 2014, London, United Kingdom, September 8-12, 2014. pp.
  116--121 (2014)

\bibitem{CG01}
Carriero, N., Gelernter, D.: A computational model of everything. Commun. {ACM}
   44(11),  77--81 (2001)

\bibitem{tata2007}
Comon, H., Dauchet, M., Gilleron, R., L\"oding, C., Jacquemard, F., Lugiez, D.,
  Tison, S., Tommasi, M.: Tree automata techniques and applications. Available
  on: \url{http://www.grappa.univ-lille3.fr/tata} (2007), release October, 12th
  2007

\bibitem{CousotCousot77-1}
Cousot, P., Cousot, R.: {Abstract Interpretation: A Unified Lattice Model for
  Static Analysis of Programs by Construction or Approximation of Fixpoints}.
  In: {Proceedings of the 4th ACM SIGACT-SIGPLAN Symposium on Principles of
  Programming Languages}. pp. 238--252. ACM, New York, NY, USA (1977)

\bibitem{DamianiVPB15}
Damiani, F., Viroli, M., Pianini, D., Beal, J.: Code mobility meets
  self-organisation: {A} higher-order calculus of computational fields. In:
  Procs.~of Formal Techniques for Distributed Objects, Components, and Systems
  - 35th {IFIP} {WG} 6.1 International Conference, {FORTE} 2015. Lecture Notes
  in Computer Science, vol. 9039, pp. 113--128. Springer (2015)

\bibitem{ieee16}
Degano, P., Ferrari, G.L., Galletta, L.: {A Two-Component Language for
  Adaptation: Design, Semantics, and Program Analysis}. {IEEE} Trans. Software
  Eng.  42(6),  505 -- 529 (2016)

\bibitem{Den82}
Denning, D.: Cryptography and Data Security. Addison-Wesley, Reading Mass.
  (1982)

\bibitem{Elejoste}
Elejoste, P., Perallos, A., Chertudi, A., Angulo, I., Moreno, A., Azpilicueta,
  L., Astrain, J., Falcone, F., Villadangos, J.E.: An easy to deploy street
  light control system based on wireless communication and led technology.
  Sensors  13(5),  6492--6523 (2013)

\bibitem{ECMC14}
Escolar, S., Carretero, J., Marinescu, M., Chessa, S.: Estimating energy
  savings in smart street lighting by using an adaptive control system. {IJDSN}
   2014 (2014)

\bibitem{G85}
Gelernter, D.: Generative communication in {L}inda. {ACM} Trans. Program. Lang.
  Syst.  7(1),  80--112 (1985)

\bibitem{Heintze:1994}
Heintze, N.: Set-based analysis of {ML} programs. In: Proceedings of the 1994
  ACM conference on LISP and functional programming. pp. 306--317. LFP '94,
  ACM, New York, NY, USA (1994)

\bibitem{Henzinger96}
Henzinger, T.A.: The theory of hybrid automata. In: Proceedings, 11th Annual
  {IEEE} Symposium on Logic in Computer Science. pp. 278--292. {IEEE} Computer
  Society (1996)

\bibitem{CAS91}
Herlihy, M.: Wait-free synchronization. {ACM} Trans. Program. Lang. Syst.
  13(1),  124--149 (1991)

\bibitem{IERC}
IERC: {The Internet of Things 2012 -- New Horizons} (2012),
  \url{http://www.internet-of-things-research.eu/pdf/IERC_Cluster_Book_2012_WEB.pdf}

\bibitem{Khedker:2009}
Khedker, U., Sanyal, A., Karkare, B.: Data Flow Analysis: Theory and Practice.
  CRC Press, Inc., Boca Raton, FL, USA, 1st edn. (2009)

\bibitem{lanese13}
Lanese, I., Bedogni, L., Felice, M.D.: Internet of things: a process calculus
  approach. In: Procs of the 28th Annual {ACM} Symposium on Applied Computing,
  {SAC} '13. pp. 1339--1346. {ACM} (2013)

\bibitem{lanese10}
Lanese, I., Sangiorgi, D.: An operational semantics for a calculus for wireless
  systems. Theor. Comput. Sci.  411(19),  1928--1948 (2010)

\bibitem{Merro_Coord16}
Lanotte, R., Merro, M.: A semantic theory of the internet of things - (extended
  abstract). In: Procs.~of Coordination 2016. LNCS, vol. 9686, pp. 157--174.
  Springer (2016)

\bibitem{LanotteM17}
Lanotte, R., Merro, M.: A calculus of cyber-physical systems. In: Proc.~of
  Language and Automata Theory and Applications (LATA 2017). Lecture Notes in
  Computer Science, vol. 10168, pp. 115--127 (2017)

\bibitem{MPW92}
Milner, R., Parrow, J., Walker, D.: A calculus of mobile processes, {I}. Inf.
  Comput.  100(1),  1--40 (1992)

\bibitem{thingml}
Morin, B., Harrand, N., Fleurey, F.: Model-based software engineering to tame
  the {I}o{T} jungle. {IEEE} Software  34(1),  30--36 (2017)

\bibitem{NNN10}
Nanz, S., Nielson, F., Nielson, H.R.: Static analysis of topology-dependent
  broadcast networks. Inf. Comput.  208(2),  117--139 (2010)

\bibitem{NielsonAFPL}
Nielson, F., Nielson, H.R., Seidl, H.: A succinct solver for {ALFP}. Nordic J.
  of Computing  9(4),  335--372 (Dec 2002)

\bibitem{NiNi02}
Nielson, H.R., Nielson, F.: Flow logic: {A} multi-paradigmatic approach to
  static analysis. In: The Essence of Computation, Complexity, Analysis,
  Transformation. LNCS, vol. 2566, pp. 223--244. Springer (2002)

\bibitem{piercetpl}
Pierce, B.C.: Types and programming languages. MIT Press, Cambridge, MA, USA
  (2002)

\bibitem{Bertino2}
Rullo, A., Midi, D., Serra, E., Bertino, E.: Strategic security resource
  allocation for internet of things. In: 36th {IEEE} International Conference
  on Distributed Computing Systems. pp. 737--738. {IEEE} Computer Society
  (2016)

\bibitem{singh10}
Singh, A., Ramakrishnan, C.R., Smolka, S.: A process calculus for mobile ad hoc
  networks. Sci. Comput. Program.  75(6),  440--469 (2010)

\bibitem{VigoNN13}
Vigo, R., Nielson, F., Nielson, H.R.: Broadcast, denial-of-service, and secure
  communication. In: Proc.~of Integrated Formal Methods, 10th International
  Conference, {IFM} 2013. Lecture Notes in Computer Science, vol. 7940, pp.
  412--427. Springer (2013)

\bibitem{NielsonZigB}
Y{\"{u}}ksel, E., Nielson, H.R., Nielson, F.: A secure key establishment
  protocol for {ZigBee} wireless sensor networks. Comput. J.  54(4),  589--601
  (2011)

\bibitem{ZigBee}
{ZigBee Alliance}: \url{http://www.zigbee.org/}

\end{thebibliography}

\appendix
\newpage
\section{ Notation}\label{sec:notation}

\footnotesize{
\begin{table}[htp]
\begin{center}
\begin{tabular}{|l|l||l|l|}
\hline
$N \in \mathcal N$ & system of nodes N & $v \in {\mathcal V}$ & values\\
$\ell \in {\mathcal L}$ & the set of labels  & $x \in {\mathcal X}_\ell$ & variables of node $\ell$ \\
$\ell: [B]$ & node at $\ell$ with component B & $f \in {\mathcal F}$ & functions on terms E\\
S, A & sensors and actuators & $\dsem{E}_\Sigma$ & denotational semantics of terms\\
$i \in \mathcal{I}_\ell, j \in \mathcal{J}_\ell$ & unique identifiers of S and A &  $ \mathcal{E}$ & state of the world\\
$\Sigma_\ell: \mathcal{X} _\ell \cup \mathcal{I}_{\ell} \rightarrow \mathcal{V}$  & store of node  $\ell$ & $\!\!\!\sfreccia$ & reduction relation\\

$\widehat{G} = ({\mathbb N}, {\mathbb T}, Z, R)$ & regular tree grammar & 
         $i^{\ell}, v^{\ell}, f^\ell \in {\mathbb T}$ & abstract sensor, value, function\\
$Lang(\widehat{G})$ & language of $\widehat{G}$ & $\hat{t} \in \mathcal T$ & tree over ${\mathbb T}$ \\
$\hat{v} = (Z, R)$ & shorthand for $\widehat{G}$ & $\widehat{\mathcal V} = 2^{{\mathbb N} \times {\mathbb R}}$ & abstract terms   \\
$\mathbb R$ & the productions over ${\mathbb N}, {\mathbb T}$ & 
                $ \widehat{\Sigma}{_\ell}: {\mathcal X} \cup {\mathcal I}_\ell \rightarrow 2^{\widehat{\mathcal V}} $ & abstract store at $\ell$\\
$(\widehat{\Sigma}, \Theta)$ & estimate for terms $E$ &$ \widehat{\Sigma} = \bigcup_{\ell \in \mathcal L}\,\widehat{\Sigma}{_\ell} $ & abstract store \\
$(\widehat{\Sigma},\kappa,\Theta)$ & estimate for nodes $N$ & 
          $\kappa : {\mathcal L} \rightarrow {\mathcal L} \times \bigcup_{i = 1}^m \widehat{\mathcal V}^i$ & abstract network environment\\
$\vartheta \in  2^{\widehat{\mathcal V}}$ & set of approximated values & $\Theta: {\mathcal L} \rightarrow\ 2^{\widehat{\mathcal V}}$ & abstract data collection \\

$Comp(\ell_2,\ell_1)$ & compatibility function & 
          $\Sigma^{i}_{\ell} : \mathcal{X}_{\ell} \cup \mathcal{I} _\ell \rightarrow \mathcal{V} \times \widehat{\mathcal T}$ &
          instrumented store \\
$\dsem{E}^i_{\Sigma^{i}_{\ell}}$ & instrumented semantics & $\Sigma_\ell^i \bowtie \widehat{\Sigma}_\ell$ & concrete and abstract store agree\\

$k \in \mathcal K$ & crypto key & $enc^\ell_r, k_i$ & abstract values for security\\
$\mathfrak{D}$ & set of security tags & 
        $ \mathfrak{T_D} : \widehat{\mathcal{T}} \rightarrow \mathfrak{D}$ and &  tagging functions \\
$\mathfrak{P} \subseteq \mathfrak{D} \times \mathcal{L} \times \mathcal{L}$ & security policy &  
        $ \mathfrak{T_S} : \widehat{\mathcal{V}} \rightarrow \mathfrak{D}$ & \\
\hline
\end{tabular}
\end{center}
\label{tab:notation}
\caption{Notation and abbreviations}
\end{table}%
}
\normalsize
\section{ Proofs}\label{sec:proofs}
%

In this appendix, we restate the theorems presented earlier
in the paper and give the proofs of their correctness,  with the help of some auxiliary lemmata. 

Our first lemma guarantees that the analysis for expressions respects the instrumented denotational semantics.

\begin{lemma}[Subject reduction for expressions]\label{SR4E}
\ \\
For all $E$, $\nEFORM{E}{\vartheta}$ implies that
there exist $\Sigma{_\ell^i}$ and  $\widehat{G} \in \vartheta$ such that $\Sigma{_\ell^i} \bowtie \widehat{\Sigma}_\ell$ and $(\dsem{E}^i_{\Sigma^i_\ell})_{\downarrow_2} \in Lang(\widehat{G})$.
\end{lemma}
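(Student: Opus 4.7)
The plan is to proceed by structural induction on the expression $E$, using the rules in Table~\ref{analysisT} for the analysis and the instrumented denotational semantics given just before the statement. Throughout, I assume that $\Sigma_\ell^i \bowtie \widehat{\Sigma}_\ell$ (the agreement hypothesis provides the witness $\Sigma_\ell^i$ needed by the statement), and I focus on exhibiting, for each form of $E$, a grammar $\widehat{G} \in \vartheta$ whose language contains $(\dsem{E}^i_{\Sigma^i_\ell})_{\downarrow_2}$.

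For the base cases $E = v$ and $E = i$, the argument is immediate from the shape of the analysis rules (E-val) and (E-sen): the singleton grammar $(V^\ell,\{V^\ell\to v^\ell\})$ (respectively $(I^\ell,\{I^\ell\to i^\ell\})$) is forced to be in $\vartheta$, and its language is exactly $\{v^\ell\}$ (respectively $\{i^\ell\}$), which matches $(\dsem{v}^i_{\Sigma^i_\ell})_{\downarrow_2} = v^\ell$ (respectively $(\dsem{i}^i_{\Sigma^i_\ell})_{\downarrow_2} = i^\ell$ when the sensor has been probed, using that (Sense) stores $(v,i^\ell)$ via the assumption $measure(i,\mathcal{E})=(v,i^\ell)$). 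For $E = x$, rule (E-var) gives $\widehat{\Sigma}_\ell(x) \subseteq \vartheta$, and agreement $\Sigma_\ell^i \bowtie \widehat{\Sigma}_\ell$ directly supplies the required $\widehat{G} \in \widehat{\Sigma}_\ell(x)$.

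For the inductive step $E = f(E_1,\dots,E_r)$, rule (E-fun) gives estimates $\vartheta_i$ with $\nEFORM{E_i}{\vartheta_i}$. By the induction hypothesis, for each $i$ there is $\widehat{G}_i = (Z_i,R_i) \in \vartheta_i$ with $(\dsem{E_i}^i_{\Sigma_\ell^i})_{\downarrow_2} \in Lang(\widehat{G}_i)$. Then (E-fun) forces
\[
\widehat{G} \;=\; \bigl(F^\ell,\; \{F^\ell \to f^\ell(Z_1,\dots,Z_r)\} \cup \textstyle\bigcup_{i=1}^r R_i\bigr) \;\in\; \vartheta.
\]
Unfolding the instrumented semantics of $f$ gives $(\dsem{E}^i_{\Sigma_\ell^i})_{\downarrow_2} = f^\ell(t_1,\dots,t_r)$ with $t_i = (\dsem{E_i}^i_{\Sigma_\ell^i})_{\downarrow_2}$; by the IH each $t_i$ is derivable from $Z_i$ using productions in $R_i$, hence the whole tree is derivable from $F^\ell$ in the combined grammar, proving $(\dsem{E}^i_{\Sigma_\ell^i})_{\downarrow_2} \in Lang(\widehat{G})$.

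The main subtlety lies in the sensor and variable cases when $\Sigma^i_\ell(w) = \bot$: the agreement relation is trivially satisfied, and there is nothing to prove since the instrumented value is undefined and $E$ cannot reduce to produce a concrete tree. This is why the $\bowtie$-relation is written with the disjunction on $\bot$; the inductive argument needs only to handle the defined branch, and the degenerate branch is vacuous. Once the base cases are dispatched correctly with respect to this convention, the function-application step is a routine combination of the inductive hypotheses with the derivability closure of regular tree grammars under substitution of non-terminals.
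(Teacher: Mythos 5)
Your proof is correct and follows essentially the same route as the paper, which simply states that a straightforward induction on the analysis rules for terms suffices; your write-up just fills in the details (base cases via the singleton grammars and the agreement relation, the $f(E_1,\dots,E_r)$ case by combining the sub-grammars). The only minor omission is that the paper's proof also covers the encryption rule (E-enc) for the extended calculus, whose treatment is analogous to your (E-fun) case.
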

\begin{proof}
A straightforward induction on all the rules in \tablename~\ref{analysisT} and on rule (E-enc) suffices.
\end{proof}

\noindent Before proving the correctness of the analysis for systems of nodes, we prove that it is invariant under the structural congruence and is preserved under node components reduction.

\begin{lemma}[Congruence]\label{lemma:congr}\ \\
\begin{itemize}
\item
If $B \equiv B'$ then
$\nPAFORM{\ell}{B}$ iff $\nPAFORM{\ell}{B'}$.
\item
If $N \equiv N'$ then
$\nNAFORM{N}$ iff $\nNAFORM{N'}$.
\end{itemize}
\end{lemma}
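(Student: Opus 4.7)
The plan is to prove both clauses by induction on the derivation of the structural congruence $\equiv$, treating each axiom in $(\bigstar)$ as a base case and the standard congruence closure (reflexivity, symmetry, transitivity, and lifting through $|$ and $\|$) as inductive steps. Since $\equiv$ is an equivalence relation, it suffices to show one direction for each axiom; the reverse follows by symmetry. For the lifting steps, I would inspect the rules (N-par) and (B-par), which analyse the two operands independently, so any replacement of one operand by a congruent one preserves validity of the judgement.

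The monoid axioms for $(|,\NIL)$ on systems and $(\|,\NIL)$ on node components are almost immediate: commutativity of $|$ matches the symmetric shape of (N-par); the analysis of $\NIL$ via (N-nil) and (B-nil) imposes no constraints, so absorbing/removing $\NIL$ preserves validity; associativity reduces to reassociating conjunctions of the premises generated by (N-par) and (B-par). The empty-output axiom $\OUTM{E_1,\cdots,E_r}{\emptyset}.\NIL \equiv \NIL$ is handled by observing that in (P-out), the inner universal quantification $\forall \ell' \in L$ becomes vacuous when $L=\emptyset$, so the only remaining premise is $\nPAFORM{\ell}{\NIL}$, which is trivially satisfied; both sides therefore generate the same (empty) set of non-vacuous constraints.

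The only subtle case is the unfolding axiom $\mu h.\,X \equiv X\{\mu h.\,X/h\}$ for $X \in \{P,A,S\}$. For sensors and actuators, (B-sen) and (B-act) impose no constraints on the structure of $S$ or $A$, so the equivalence is trivial on both sides. For processes the argument leans on the convention, stated just after (P-var), that each iteration variable $h$ is uniquely bound to a fixed body $P$: this makes (P-rec) and (P-var) both degenerate into the single premise $\nPAFORM{\ell}{P}$. Hence analysing $\mu h.\,P$ and analysing $P\{\mu h.\,P/h\}$ generate exactly the same set of constraints on the underlying subterms of $P$, because every occurrence of $h$ inside $P$ (and hence every new copy of $\mu h.\,P$ introduced by the substitution) triggers, via (P-var) or (P-rec) respectively, a check of $\nPAFORM{\ell}{P}$.

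The main obstacle I anticipate is the recursion case, specifically making precise that the uniqueness-of-binding convention is strong enough to make the analyses of $\mu h.\,P$ and $P\{\mu h.\,P/h\}$ coincide without resorting to an auxiliary induction on the shape of $P$. Once this is observed, the lemma follows by a routine induction: the inductive step for parallel composition uses (N-par) and (B-par) together with the induction hypothesis applied to the congruent operand, and the reflexivity and transitivity cases are immediate. The proof for systems $N \equiv N'$ mirrors that for components, using (N-node) to push congruences at the node-component level up to the system level via the first clause of the lemma.
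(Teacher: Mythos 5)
Your proposal is correct and follows essentially the same route as the paper, whose proof is a one-line inspection of the congruence rules noting that commutativity/associativity of $\wedge$ mirrors those of $|$ and $\|$ and that any estimate validates $\NIL$ (and, implicitly, that (P-rec), (P-var), (B-sen), (B-act) make the unfolding axiom harmless). Your extra care on the $\mu h.\,X \equiv X\{\mu h.\,X/h\}$ case — noting that (P-rec) and (P-var) both collapse to the single judgement on the uniquely bound body, up to a routine structural induction over the substituted term — is a legitimate filling-in of detail the paper leaves implicit, not a different argument.
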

\begin{proof} 
It suffices to inspect the rules for $\equiv$, since
associativity and commutativity of $\wedge$
reflects the same properties of both $|$ and $\|$, and to recall that any triple is a valid
estimate for $\NIL$.  
\end{proof}

\begin{lemma}[Subject reduction for node components]\label{SRtheoremB}\ \\
If $\nPAFORM{\ell}{B}$ and $B \sfreccia B'$ and $\Sigma_\ell^i \bowtie \widehat{\Sigma}_\ell$, for $\Sigma_\ell^i$ {\em ($in$ $B$)}
then $\nPAFORM{\ell}{B'}$ and $\Sigma{_\ell^i}' \bowtie \widehat{\Sigma}_\ell$, for $\Sigma{_\ell^i}'$ {\em ($in$ $B'$)}.
\end{lemma}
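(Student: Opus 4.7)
The plan is to proceed by induction on the derivation of $B \sfreccia B'$, doing a case analysis on the last reduction rule applied that can act at the node-component level. These are (Sense), (Asgm), (Ev-out), (Cond), (Int), (A-com), (Act), (ParB), and (CongrY) restricted to components. The inter-node rule (Multi-com) does not arise here, and (Phys) only affects $\mathcal{E}$, not $B$. For each case one must verify two things: that the CFA judgement $\nPAFORM{\ell}{B'}$ still holds and that agreement $\Sigma{_\ell^i}' \bowtie \widehat{\Sigma}_\ell$ is preserved for the resulting (instrumented) stores.

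The easy cases are those that do not touch the store: (Int), (A-com), (Act), (Cond), (Ev-out). For (Cond), the rule (P-cond) requires valid estimates for both branches, so either continuation is covered. For (Ev-out) the spawned process $\OUTM{v_1,\dots,v_r}{L}.\NIL$ is analysed using the same witnesses $\vartheta_i$ provided by the hypothesis $\nPAFORM{\ell}{\OUTM{E_1,\dots,E_r}{L}.P}$ together with Lemma~\ref{SR4E}: evaluating $E_i$ in the instrumented semantics yields an abstract tree lying in $Lang(\widehat{G})$ for some $\widehat{G}\in\vartheta_i$, hence the receiver condition on $\kappa(\ell')$ imposed by (P-out) is still satisfied by the spawned output. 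For (A-com) and (Act) nothing in $\widehat{\Sigma}_\ell$, $\kappa$ or $\Theta$ is touched. (Int) is immediate.

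The two informative cases are (Sense) and (Asgm). For (Sense), the hypothesis includes $\nPAFORM{\ell}{\Sigma}$, i.e.\ (B-store), which guarantees $(I^\ell,\{I^\ell\rightarrow i^\ell\})\in\widehat{\Sigma}_\ell(i)$; since $measure(i,\mathcal{E})=(v,i^\ell)$ by the convention on the instrumented semantics, the updated slot $\Sigma\{(v,i^\ell)/i\}$ still agrees with $\widehat{\Sigma}_\ell$ because $i^\ell\in Lang((I^\ell,\{I^\ell\rightarrow i^\ell\}))$. For (Asgm), from (P-ass) we have $\nEFORM{E}{\vartheta}$ and $\vartheta\subseteq\widehat{\Sigma}_\ell(x)$, plus $\nPAFORM{\ell}{P}$. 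Invoking Lemma~\ref{SR4E} on $E$ yields some $\widehat{G}\in\vartheta$ with $(\dsem{E}^i_{\Sigma_\ell^i})_{\downarrow_2}\in Lang(\widehat{G})$; since $\widehat{G}\in\widehat{\Sigma}_\ell(x)$ too, the updated store $\Sigma_\ell^i\{\dsem{E}^i_{\Sigma_\ell^i}/x\}$ still agrees with $\widehat{\Sigma}_\ell$ at $x$, while it is unchanged elsewhere.

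The remaining cases (ParB) and (CongrY) are routine: (ParB) follows by the induction hypothesis applied to the reducing component, combined with rule (B-par); (CongrY) uses Lemma~\ref{lemma:congr} to transport the estimate across $\equiv$ before and after applying the induction hypothesis. The main delicate point, and really the only one that requires care, is maintaining the agreement relation $\bowtie$ through store updates in (Sense) and (Asgm); everything else essentially recombines judgements already present in the CFA premises. Because our CFA never uses the first component of instrumented values, correctness at this level reduces exactly to the grammar-containment argument sketched above for those two rules.
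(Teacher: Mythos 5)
Your case analysis, the way you split the cases into store-touching ((Sense), (Asgm)) versus store-preserving ones, and the key ingredients — the (B-store)/(P-ass) premises, Lemma~\ref{SR4E} on the instrumented semantics, rule (B-par) for (ParB) and Lemma~\ref{lemma:congr} for the congruence case — coincide with the paper's own proof for all the rules you list; in particular your treatment of (Sense), (Asgm) and (Ev-out) is essentially verbatim the paper's argument.

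The one genuine omission is the (Decrypt) case. The paper proves this lemma for the calculus extended with the cryptographic primitives of Section~\ref{sec:security} (the paper states that the extension's correctness proofs ``have already been included in those of the original versions''), and its appendix proof of this very lemma has an explicit (Decrypt) case. That case cannot be absorbed into your ``easy'' group: the decryption rule updates the store by binding $x_{j+1},\dots,x_r$ to components of the decrypted value, so agreement $\Sigma{_\ell^i}' \bowtie \widehat{\Sigma}_\ell$ must be re-established there, exactly as in (Asgm). The argument is the same pattern you use for assignment — from the premises of (P-dec) one gets $\nEFORM{E}{\vartheta}$, the matching estimates $\vartheta_i$, and, via the auxiliary function $\mathtt{D}$ together with the instrumented semantics of encryption terms and Lemma~\ref{SR4E}, grammars $\hat{v}_i \in \widehat{\Sigma}_\ell(x_i)$ whose languages contain the abstract trees of the bound values, plus $\nPAFORM{\ell}{P}$ for the continuation — but the case has to be stated, since your enumeration of reduction rules silently assumes the core calculus only.
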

\begin{proof} 
Our proof is by induction on the shape of the derivation of $B \sfreccia B'$ and by cases on the last rule used.
In all the cases below we will have that for $\Sigma^{i}$ and $B_1$ components of $B$,  both facts (*) $\nPAFORM{\ell}{\Sigma^{i}}$, and (**) $\nPAFORM{\ell}{B_1}$ trivially hold.
So we will omit mentioning these judgements below, for the sake of brevity.
Also, we will implicitly apply the rules of Table~\ref{analysis}, e.g.\ the repeated applications of rule (B-par).
\\
\begin{itemize}
\item
 Case (Sense). We assume 
$\nPAFORM{\ell}{\Sigma^{i} \parallel probe(i).\, S \parallel B_1 }$ that has been proved because rule (B-sen) proves that
$\nPAFORM{\ell}{probe(i).\, S}$ holds.
By the same rule also $\nPAFORM{\ell}{S}$ holds, and thus
$\nPAFORM{\ell}{\Sigma^{i}\{(v,i^\ell)/{i}\} \parallel S \parallel B_1}$.
We are left to show that
$\Sigma{_{\ell}^i}' \bowtie \widehat{\Sigma}_\ell$.
By hypothesis $\Sigma{_{\ell}^i}' (y)= \Sigma{_{\ell}^i}(y)$ for all $y \in {\mathcal X}_{\ell} \cup {\mathcal I}_{\ell}$ such that $y \neq i$, 
while $(\Sigma^i_{\ell} (i))_{\downarrow_2} = i^\ell \in Lang(\widehat{G})$, for some $\widehat{G} \in \widehat{\Sigma}_{\ell}(i)$, which \mbox{trivially holds by (*)} and Lemma~\ref{SR4E}.

\item
 Case (Asgm). We assume 
$\nPAFORM{\ell}{\Sigma^{i} \  \|\  x:=E.\,P \  \|\ B_1 }$
that has been proved because the following conditions hold:         
\begin{eqnarray}
&& \nEFORM{E}{\vartheta} \label{c1}
\\
&&\forall \hat{v}:\; \hat{v} \in \vartheta\ \Rightarrow
\hat{v} \in \widehat{\Sigma}{_\ell}({x})\hspace*{4ex} \label{c2}\\
&& \nPAFORM{\ell}{P}\label{c3}
\end{eqnarray}
We have to prove that
$\nPAFORM{\ell}{\Sigma^{i}\{(v,\hat{v})/x\} \  \|\  P \  \|\ B_1 }$ that trivially holds because of (*); (\ref{c3}) and Lemma~\ref{SR4E}; and (**).
We have only to prove that
$\Sigma{_{\ell}^i}' \bowtie \widehat{\Sigma}_\ell$, by knowing that (by hypothesis) $\Sigma{_{\ell}^i} \bowtie \widehat{\Sigma}_\ell$, that implies
$\Sigma{_{\ell}^i}' (y)= \Sigma{_{\ell}^i}(y)$ for all $y \in {\mathcal X}_{\ell} \cup {\mathcal I}_{\ell}$ such that $y \neq x$.
Therefore $\Sigma{_{\ell}^i}' (y)= \Sigma{_{\ell}^i}(y)$ for all $y \in {\mathcal X}_{\ell} \cup {\mathcal I}_{\ell}$ such that $y \neq x$,
while $(\Sigma^i_{\ell} (x))_{\downarrow_2} \in  Lang(\hat{v} \in \widehat{\Sigma}_{\ell}(x))$ because of (\ref{c2}) and of Lemma~\ref{SR4E}.

\item
 Case (E-out). We assume 
$\nPAFORM{\ell}{\Sigma^{i}  \  \|\   \OUTM{E_1,\! \cdots \!,E_r}{L}. \, P \  \|\  B_1}$
that has been proved because the following conditions hold:       
\begin{eqnarray*}
&&  \bigwedge_{i=1}^r\; \nEFORM{E_i}{\vartheta_i} \\
&&  \  \nPAFORM{\ell}{P}
\\
&&  \  \forall \hat{v}_1,\cdots,\hat{v}_r:\; \bigwedge_{i=1}^r\, \hat{v}_i \in \vartheta_i\ \Rightarrow
\forall \ell' \in L:
         (\ell,  \mess{\hat{v}_1,\cdots,\hat{v}_r}) \in  \kappa(\ell')
\end{eqnarray*}
Proving that
$
\nPAFORM{\ell}{\OUTM{v_1,\cdots,v_r}{L}. \, \NIL  \ \| \ \ P \ \| \ B_1}
$
is straightforward because of the three above conditions and because of Lemma~\ref{SR4E}.

\item
Case (Decrypt). We assume \\
$  \nPAFORM{\ell}{\Sigma^i \ \| \  \DECSO{E}{E'_1,\cdots,E'_j}{x_{j+1},\cdots,x_r}{k}{}{P} \ \| \ B_1}  $
 that has been proved because the following conditions hold
 \begin{eqnarray}
&& \nEFORM{E}{\vartheta} \label{decr1}
\\
&& \bigwedge_{i=1}^r\; \nEFORM{E_i}{\vartheta_i}  \label{decr2}
\end{eqnarray}
and $ \forall \; \hat{v} \in \vartheta $ such that $\texttt{D}(\hat{v}, k) = [\hat{v}_1,\cdots, \hat{v}_r]$ that also implies the following two conditions
 \begin{eqnarray}
&& \bigwedge_{i=j+1}^r\;  \hat{v}_i \in \widehat{\Sigma}{_\ell}({x_i}) \label{decr3}
\\
&&  \nPAFORM{\ell}{P}   \label{decr4}
\end{eqnarray}
We have to prove that
$\nPAFORM{\ell}{\Sigma^{i} \  \|\  P[v_{j+1}/x_{j+1},\cdots,v_r/x_r] \  \|\ B_1 }$ that trivially holds because of (*); (\ref{decr1}), (\ref{decr2}), (\ref{decr4}) and Lemma~\ref{SR4E}; and (**).
We are left  to prove that
$\Sigma{_{\ell}^i}' \bowtie \widehat{\Sigma}_\ell$, by knowing that (by hypothesis) $\Sigma{_{\ell}^i} \bowtie \widehat{\Sigma}_\ell$, that implies
$\Sigma{_{\ell}^i}' (y)= \Sigma{_{\ell}^i}(y)$ for all $y \in {\mathcal X}_{\ell} \cup {\mathcal I}_{\ell}$ such that $y \neq x$.
Therefore $\Sigma{_{\ell}^i}' (y)= \Sigma{_{\ell}^i}(y)$ for all $y \in {\mathcal X}_{\ell} \cup {\mathcal I}_{\ell}$ such that $y \neq x$,
while $(\Sigma^i_{\ell} (x))_{\downarrow_2} \in  Lang(\hat{v} \in \widehat{\Sigma}_{\ell}(x))$ because of (\ref{decr1}), (\ref{decr2}) and because of Lemma~\ref{SR4E}.

\item
 The cases (Cond), (Int) and (A-com) are straightforward; 
the case (Act) is trivial because for us the world is a black box;
the case (ParB) directly follows from the induction hypothesis; 
Lemma \ref{lemma:congr} suffices to prove the case (CongrB).
\qedhere
\end{itemize}
\end{proof}

\subjectreduction*
\begin{proof} 
Our proof is by induction on the shape of the derivation of $N \sfreccia N'$ and by cases on the last rule used.
In all the cases below we will have that (*) $\nPAFORM{\ell}{\Sigma^{i}}$, as well as that (**) $\nPAFORM{\ell}{B_m}$ (whenever $m \geq 1$), for $\Sigma^i, B_m$ components of $N$.
So we will omit mentioning these judgements below, for the sake of brevity.
Also, we will implicitly apply the rules of Table~\ref{analysis}, e.g.\ the applications of rules (N-node) and (N-par).
\begin{itemize}

\item
Case (Multi-com). We assume
\[
\!\!\! \nNAFORM{\ell_1 \! : \!  [\OUTM{v_1,\! \cdots \!,v_r}{L}. \, \NIL \  \|\  B_1]\ | \
             \ell_2  \! : \! [\Sigma^{i}_{\ell_{2}} \ \| \ (E_1,\! \cdots \!,E_j;x_{j+1},\! \cdots \!,,x_r).Q \ \| \ B_2]}
             \] 
that is implied by
$\nNAFORM{ \ell_1: [\OUTM{v_1,\cdots,v_r}{L}. \, \NIL \  \|\  B_1]}$ and by\\
             $\nNAFORM{\ell_2:[\Sigma^{i}_{\ell_{2}} \ \| \ (E'_1,\cdots,E'_j;x_{j+1},\cdots,x_r).Q \ \| \ B_2]}$
that have been proved because the following conditions hold:      
 
\begin{eqnarray}
&& \bigwedge_{i=1}^r\; \nEFORMM{\ell_1}{v_i}{\vartheta_i} \label{c5}
\\   
&& \nPAFORM{\ell_1}{\NIL}\label{c7}
\\
&& 
\forall \hat{v}_1,\cdots,\hat{v}_r:\; \bigwedge_{i=1}^r\, \hat{v}_i \in \vartheta_i\ \Rightarrow
\forall \ell' \in L:
         (\ell_1,  \mess{\hat{v}_1,\cdots,\hat{v}_r}) \in  \kappa(\ell')  \hspace*{4ex} \label{c6}
\\
&& \bigwedge_{i=1}^{j}\;  \nEFORMM{\ell_2}{E_i}{\vartheta'_i} \label{c8}
\\
\end{eqnarray}


and $\forall (\ell_1, \mess{ \hat{v} _1,\cdots,\hat{v}_k}) \in \kappa(\ell_2)$ such that 
$Comp(\ell_1,\ell_2) \hspace{2mm}(\clubsuit)\label{c4}$

\begin{eqnarray}
& &  \bigwedge_{i=j+1}^r\;  \hat{v}_i \in \widehat{\Sigma}{_{\ell_{2}}}({x_i}) \label{c10} \ \\
& &  \nPAFORM{\ell_2}{Q} \label{c11}
\end{eqnarray}
Note that since~$\clubsuit$ holds, also \ref{c10} and~\ref{c11} do.
Also, $\forall i\; \nEFORMM{\ell_1}{v_i}{\vartheta_i}$ implies $\hat{v}_i \in \vartheta_i$, 
where $\hat{v}_i = (\dsem{v_i}^i_{\Sigma^{i}_{\ell_{2}}})_{\downarrow_2}$, and that $\ell_2 \in L$ because $N \sfreccia N'$.
%
We have to prove that
\[
\!\!\! \nNAFORM{\ell_1: [\OUTM{v_1,\cdots,v_r}{L \setminus \{\ell_2\}}. \NIL   \|  B_1]
              \ | \ \ell_2: [\Sigma^{i}_{\ell_{2}}\{v_{j+1}/x_{j+1},\cdots,v_r/x_r\}  \|  Q \|  B_2]}
\]              
that in turn amounts to prove that
\[
\begin{array}{l}
(a)\ \nPAFORM{\ell_1}{\OUTM{v_1,\cdots,v_r}{L \setminus \{\ell_2\}}. \,\NIL  \ \| \ B_1 } 
\\
(b)\  \nPAFORM{\ell_2}{ \Sigma^{i}_{\ell_2}\{(v_{j+1}, \hat{v}_{j+1})/x_{j+1},\cdots,(v_r, \hat{v}_r)/x_r\} \ \| \ Q\ \| \ B_2 }
\end{array}
\]    
We have that $(a)$ holds trivially because of (\ref{c5}),~(\ref{c7}) and~(\ref{c6}) (of course $L\setminus \{ \ell_2\} \subseteq L$), while
$(b)$ holds because of the remaining conditions and of Lemma~\ref{SR4E}.
We are left to prove that
$\Sigma{_{\ell_{2}}^i}' \bowtie \widehat{\Sigma}_{\ell_{2}}$.
Now, we know that $\Sigma{_{\ell_{2}}^i}' (y)= \Sigma{_{\ell_{2}}^i}(y)$ for all $y \in {\mathcal X}_{\ell_{2}} \cup {\mathcal I}_{\ell_{2}}$ such that $y \neq x_i$.
The condition $(\Sigma^i_{\ell_2} (x_i))_{\downarrow_2} \in  Lang(\hat{v} \in \widehat{\Sigma}_{\ell_{2}}(x_i))$ holds for all $x_i$  
because of (\ref{c10}).

\item
The case (CongrN) follows from Lemma \ref{lemma:congr}; and the remaining cases directly follow from the induction hypothesis.
\qedhere
\end{itemize}
\end{proof}

\begin{definition}\label{partialorder}
The set of estimates can be partially ordered by setting \\
$(\widehat{\Sigma}_1,\kappa_1, \Theta_1)$ $\sqsubseteq (\widehat{\Sigma}_2,\kappa_2, \Theta_2)$
iff
\begin{itemize}
\item
$\forall x \in {\mathcal X}_{\ell} \cup {\mathcal I}_{\ell}: \widehat{\Sigma}_1(x) \subseteq \widehat{\Sigma}_2(x)$

\item
$\forall \ell \in {\mathcal L}: \kappa_1(\ell) \subseteq \kappa_2(\ell)$

\item
$\forall \ell \in {\mathcal L}: \Theta_1(\ell) \subseteq \Theta_2(\ell)$

\end{itemize}
\end{definition}

\noindent
This suffices for making the set of proposed
solutions into a complete lattice; 
we can thus write 
$(\widehat{\Sigma}_1,\kappa_1, \Theta_1) \sqcup (\widehat{\Sigma}_2,\kappa_2, \Theta_2)$
for the binary least upper bound (defined point-wise),
$\sqcap{\mathcal M}$ for the greatest lower bound of a set ${\mathcal M}$
of proposed solutions (also defined pointwise),
and $(\bot,\bot,\bot)$ for the least element.

A Moore family ${\mathcal M}$ contains a greatest element ($\sqcap \emptyset$) and a least element ($\sqcap {\mathcal M}$). 
Since the set of analysis estimates constitutes a Moore family, we have that
there always is a least estimate to the specification in \tablename~\ref{analysis}.


\extsolutions*
\begin{proof}
Let ${\mathcal M} = \{ (\widehat{\Sigma}_r,\kappa_r, \Theta_r) \}$ be a set of estimates for $N$.
We proceed by structural induction on $N$,
to check that $\sqcap {\mathcal M} = (\widehat{\Sigma}',\kappa', \Theta') \models N$.
We just consider one case. The others are similar.

Case $\ell : [B]$.
Since 
$\forall r: (\widehat{\Sigma}_r,\kappa_r, \Theta_r) \models  \ell : [B]$, then 
$(\widehat{\Sigma}_r,\kappa_r, \Theta_r) \models_{_{\ell}}B$.

\noindent
Using the induction hypothesis and the fact that the components of $(\widehat{\Sigma}',\kappa', \Theta')$ are obtained pointwise, it follows that 
$(\widehat{\Sigma}',\kappa', \Theta') \models_{_{\ell}}B$
thus establishing the required judgement
$(\widehat{\Sigma}',\kappa', \Theta') \models  \ell : [B]$.
\end{proof}

The following auxiliary lemma helps proving Corollary~\ref{cor:Theta}, and in its statement we let $B \xrightarrow{E_1, ...,E_n}_\ell B'$ denote a reduction in which all $E_i$ are evaluated at node $\ell$, omitting the environment for readability.

\begin{lemma}\label{cor:ThetaB}
If $\nPAFORM{\ell}{B}$ and $B \xrightarrow{E_1, ...,E_n}_\ell  B'$ then $\forall r \in [0,n]$ 
there exists $\widehat{G} \in \Theta(\ell)$ such that
$(\dsem{E_r}^i_{\Sigma^i_\ell})_{\downarrow_2} \in Lang(\widehat{G})$.
\end{lemma}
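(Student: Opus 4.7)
The plan is to proceed by induction on the shape of the derivation $B \xrightarrow{E_1,\ldots,E_n}_\ell B'$, with case analysis on the last rule applied. The payoff of the statement lies in the base cases, i.e., the rules of \tablename~\ref{opsem} that actually evaluate expressions at node $\ell$ (namely (Sense), (Asgm), (Ev-out), (Multi-com), (Cond), (A-com), and (Decrypt)); the inductive cases (ParB), (CongrB) will follow immediately from the induction hypothesis and Lemma~\ref{lemma:congr}.

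For each base case, I would first inspect the corresponding analysis rule in \tablename~\ref{analysis} (respectively (B-sen), (P-ass), (P-out), (P-in) together with (P-out) for the sender, (P-cond), (P-act), and (P-dec)). These rules all share a common pattern: whenever an expression $E_r$ is used in the process that is about to reduce, a premise $\nEFORM{E_r}{\vartheta_r}$ is required, with $\vartheta_r \subseteq \Theta(\ell)$ enforced by the side condition built into every term rule of \tablename~\ref{analysisT} (and also into (E-enc)). Hence, from validity of the estimate for $B$ I obtain $\nEFORM{E_r}{\vartheta_r}$ with $\vartheta_r \subseteq \Theta(\ell)$ for each relevant $r$.

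Next I invoke Lemma~\ref{SR4E} (Subject reduction for expressions), which yields a grammar $\widehat{G} \in \vartheta_r$ such that $(\dsem{E_r}^i_{\Sigma^i_\ell})_{\downarrow_2} \in Lang(\widehat{G})$; but since $\vartheta_r \subseteq \Theta(\ell)$, this $\widehat{G}$ lies in $\Theta(\ell)$ as required. The only mildly delicate sub-case is (Multi-com), where the expressions evaluated at $\ell$ are the matching expressions $E_1,\ldots,E_j$ of the input prefix residing at $\ell$; here I read the premises of rule (P-in), which include $\bigwedge_{i=1}^{j} \nEFORM{E_i}{\vartheta_i}$ at node $\ell$, and then apply the same argument. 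For (Decrypt) and (Ev-out), the analogous premises are available in (P-dec) and (P-out) respectively.

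The inductive step for (ParB) is immediate: the transition happens inside one parallel component whose estimate is valid by (B-par), and the induction hypothesis yields the desired $\widehat{G} \in \Theta(\ell)$. For (CongrB) I appeal to Lemma~\ref{lemma:congr} to transfer validity across $\equiv$ and then apply the induction hypothesis. I do not expect a real obstacle here; the only thing to be careful about is verifying in each case that \emph{all} expressions mentioned in the decoration $E_1,\ldots,E_n$ of the transition are indeed analysed with a $\vartheta_r \subseteq \Theta(\ell)$, which is guaranteed uniformly by the side conditions on every rule in \tablename~\ref{analysisT} and (E-enc).
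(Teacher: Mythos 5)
Your proposal is correct and follows essentially the same route as the paper: induction on the derivation of the transition, extracting from the premises of the relevant rules of \tablename~\ref{analysis} the judgements $\nEFORM{E_r}{\vartheta_r}$ with $\vartheta_r \subseteq \Theta(\ell)$ (enforced by the term rules of \tablename~\ref{analysisT} and (E-enc)), and then concluding via Lemma~\ref{SR4E}, with (ParB) and (CongrB) handled by the induction hypothesis and Lemma~\ref{lemma:congr}. The only small inaccuracy is your list of base cases: (Multi-com) is a rule on systems of nodes and cannot conclude a derivation of a component transition $B \sfreccia B'$, and (Sense) and (A-com) evaluate no terms of the syntax, so these cases are vacuous --- exactly as the paper notes when it says that no other rules evaluate terms.
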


\begin{proof}
By Lemma~\ref{SRtheoremB}, $\nPAFORM{\ell}{B'}$ holds. 
The proof proceeds by induction on the shape of the derivation of $B \sfreccia B'$ and by cases on the last rule used.
\begin{itemize}
\item 
Case (Asgm).
If this rule is applied, then 
$B =  \Sigma^{i} \  \|\  x:=E.\,P \  \|\ B_1.
$
Since 
$
\nPAFORM{\ell}{B}
$
we have that, in particular, 
$\nPAFORM{\ell}{x : = E}.\,{P}$ and in turn that 
$ \nEFORM{E}{\vartheta}$, and Lemma~\ref{SR4E} suffices for establishing that $ \nEFORM{\dsem{E}^i}{\vartheta}$.
By the rules in \tablename~\ref{analysisT}, we have 
the required  $\vartheta \subseteq \Theta(\ell)$.

\item
Case (Decrypt). 
In this case we have that \\
$\nPAFORM{\ell}{B = \Sigma^i \ \| \  \DECSO{E}{E'_1,\cdots,E'_j}{x_{j+1},\cdots,x_r}{k}{}{P} \ \| \ B_1}$.
Since the estimate is valid, we have in particular that the premises of the rule (i.e.\ conditions (\ref{decr1}), (\ref{decr2}), and (\ref{decr3}) stated in the proof of Lemma~\ref{SRtheoremB}) hold.
The existence of the required $\widehat{G}$ now follows, because Lemma~\ref{SR4E} can be used to recover the abstract values $\hat{v}, \hat{v}_1, \dots, \hat{v}_n$ needed in the rule (P-dec).

\item The cases (Ev-out), and (Cond) are similar.

\item
The cases (ParB) and (CongrB) directly follow from the induction hypothesis. 

\item
No other rules evaluate terms and their proof is trivial.
\qedhere
\end{itemize}
\end{proof}

\corTheta*
\begin{proof}\ %
\begin{enumerate}
\item
Immediate; note that for the rule (Node) Lemma~\ref{cor:ThetaB} suffices.
\item
By Theorem~\ref{SRtheorem}, we have that $\nNAFORM{N'}$, so we proceed by
induction on the shape of the derivation of $N \sfreccia N'$ and by cases on the last rule used.
\begin{itemize}
\item
Case (Multi-com) follows directly from the subject reduction Theorem~\ref{SRtheorem}.
%
\item
Cases (ParN), (CongrN), and (Node) directly follow from the induction hypothesis, and for the other rules  the premise is false.
\qedhere
\end{itemize}

\end{enumerate}
\end{proof}

\ingredient*
\begin{proof}
Immediate.
\end{proof}

\propaga*
\begin{proof}
Suppose that both items (1) and (2) hold and assume by contradiction that $N$ enjoys $\mathfrak{P}$ with respect to $\mathfrak{D}$, i.e.\ that  
$N \rightarrow^* N' \xrightarrow{\mess{v_1,\dots,v_r}}_{\ell_1, \ell_2} N''$
and that $\exists \bar{i} : \mathfrak{P}(\mathfrak{T_D}(v_{\bar{i} \downarrow 2}), \ell_1, \ell_2)$ does not hold.
Then, by Corollary~\ref{cor:Theta},
it holds $(\ell_1,\mess{\hat{v}_1,\dots,\hat{v}_r}) \in \kappa(\ell_2)$, where for all $i$ there exists 
$\widehat{G} \in \hat{v}_i$ such that  $v_{i\downarrow_2} \in Lang(\widehat{G})$.
Now, since item (2) holds, we know that $\forall i. \ \mathfrak{P}(\mathfrak{T_S}(\hat{v}_i), \ell_1, \ell_2)$ holds.
Since $\mathfrak{T_D}, \mathfrak{T_S}$ is a pair of tagging functions, for all $\hat{t} \in Lang(\widehat{G})$
we have that $\mathfrak{P}(\mathfrak{T_S}(\hat{v}_i), \ell_1, \ell_2) = \mathfrak{P}(\mathfrak{T_D}(\hat{t}), \ell_1, \ell_2)$, in particular for $i = \bar{i}$: contradiction.
\end{proof}

\sed*
\begin{proof}
Immediate by inducing on the structure of $\hat{t}$.
\end{proof}

\confine*
\begin{proof}
Follows easily from Theorem~\ref{Th:propagated} by letting $\mathfrak{D} = \{secret, public\}$; 
$ \mathscr{S}, \mathscr{P}$ as pair of tagging functions (Lemma~\ref{lemma:S-e-D}); and 
$\mathfrak{P}(d, \ell_1, \ell_2) = true$ if and only if $d = public$.
\end{proof}

\nrunwd*
\begin{proof}
Follows easily from Theorem~\ref{Th:propagated} by letting $\mathfrak{D} = \{\bullet\}$; 
$ \mathfrak{T_S}, \mathfrak{T_D}$ as pair of (constant) tagging functions; and 
$\mathfrak{P}(d, \ell_1, \ell_2) = true$ if and only if $level(\ell_1) \leq level(\ell_2)$.
\end{proof}

\ceo*
\begin{proof}
Immediate by inducing on the structure of $\hat{t}$.
\end{proof}

\selectiveprop*
\begin{proof}
Follows easily from Theorem~\ref{Th:propagated} by letting $\mathfrak{D} = \{\mathit{confined}, open\}$;
$ \mathfrak{T_S} = \mathcal{O}$ and $\mathfrak{T_D} = \mathcal{C}$ that form a pair of tagging functions by Lemma~\ref{lemma:C-e-O}; and 
$\mathfrak{P}(d, \ell_1, \ell_2) = true$ if and only if $d = \mathit{confined}$ and $\ell_1, \ell_2 \in \tilde{\mathcal{L}}$.
\end{proof}


\end{document}